\tikzstyle{vertex}=[circle, draw, inner sep=0pt, minimum size=6pt]
\newtheorem{theorem}{Theorem}
\newtheorem{definition}[theorem]{Definition}
\newtheorem{conjecture}[theorem]{Conjecture}
\newtheorem{corollary}[theorem]{Corollary}
\newtheorem{lemma}[theorem]{Lemma}
\newtheorem*{rep@theorem}{\rep@title}
\newcommand{\newreptheorem}[2]{%
\newenvironment{rep#1}[1]{%
 \def\rep@title{#2 \ref{##1}}%
 \begin{rep@theorem}}%
 {\end{rep@theorem}}}
\newcommand{\closest}{\mbox{{\sc Closest Pair}}\xspace}
\newcommand{\bcp}{\mbox{{\sc BCP}}\xspace}
\newcommand{\ovec}{\mbox{{\sc OV}}\xspace}
\newcommand{\reals}{\mathbb{R}}
\newcommand{\integers}{\mathbb{N}}
\newcommand\slarge{\@setfontsize\slarge{14}{14}}
\newcommand{\sph}{\mathsf{sph}}
\newcommand{\bisph}{\mathsf{bsph}}
\newcommand{\cd}{\mathsf{cd}}
\newcommand{\bicd}{\mathsf{bcd}}
\def\DEBUG{true}
  \def\rem#1{{\marginpar{\raggedright\scriptsize #1}}}
   \newcommand{\karr}[1]{\rem{\textcolor{orange}{$\bullet$ #1}}}
   \newcommand{\bunr}[1]{\rem{\textcolor{blue}{$\bullet$ #1}}}
   \newcommand{\roer}[1]{\rem{\textcolor{green}{$\bullet$ #1}}}
   \newcommand{\mytodor}[1]{\rem{\textcolor{red}{$\bullet$ #1}}}
  \newcommand{\mytodor}[1]}
  \newcommand{\karr}[1]{}
  \newcommand{\bunr}[1]{}
  \newcommand{\roer}[1]{}
\newcommand{\squishlist}{
 \begin{list}{$\bullet$}
  { \setlength{\itemsep}{0pt}
     \setlength{\parsep}{3pt}
     \setlength{\topsep}{3pt}
     \setlength{\partopsep}{0pt}
     \setlength{\leftmargin}{1.5em}
     \setlength{\labelwidth}{1em}
     \setlength{\labelsep}{0.5em} } }
\newcommand{\squishlisttwo}{
 \begin{list}{$\bullet$}
  { \setlength{\itemsep}{0pt}
     \setlength{\parsep}{0pt}
    \setlength{\topsep}{0pt}
    \setlength{\partopsep}{0pt}
    \setlength{\leftmargin}{2em}
    \setlength{\labelwidth}{1.5em}
    \setlength{\labelsep}{0.5em} } }
\newcommand{\squishend}{
  \end{list}  }
\begin{document}


\title{\textbf{On the Complexity of Closest Pair via \\
Polar-Pair of Point-Sets}}


\author{{Roee David}\\
Datorama, Israel\\
e-mail:\texttt{daroeevid@gmail.com}
\and
{Karthik C.\ S.}\\
Weizmann Institute of Science, Israel\\
e-mail:\texttt{karthik.srikanta@weizmann.ac.il}
\and 
{Bundit Laekhanukit}\\
Shanghai University of Finance and Economics, China\\ \& 
Max-Planck-Institut f\"{u}r Informatik, Saarbr\"{u}cken, Germany\\
e-mail:\texttt{blaekhan@mpi-inf.mpg.de}}


\date{}
\maketitle
\renewcommand{\thefootnote}{\fnsymbol{footnote}}


\begin{abstract}
Every graph $G$ can be represented by a collection of equi-radii spheres 
in a $d$-dimensional metric $\Delta$ such that 
there is an edge $uv$ in $G$ if and only if 
the spheres corresponding to $u$ and $v$ intersect.
The smallest integer $d$ such that $G$ can be represented by a 
collection of spheres (all of the same radius) in $\Delta$ is called the {\em sphericity} of $G$,
and if the collection of spheres are non-overlapping, then
the value $d$ is called the {\em contact-dimension} of $G$.
In this paper, we study the sphericity and contact dimension of the complete
bipartite graph $K_{n,n}$ in various $L^p$-metrics 
and consequently
connect the complexity of the monochromatic closest pair and bichromatic closest pair problems.
\end{abstract}

{\color{white}{\em Polar Bear}}
\clearpage
\tableofcontents
\clearpage

\section{Introduction}
\label{sec:intro}

%
%
This paper studies the geometric representation of a complete bipartite
 graph in $L^p$-metrics and consequently
connects the complexity of the closest pair and bichromatic closest pair problems beyond certain dimensions.
Given a point-set $P$ in a $d$-dimensional $L^p$-metric,
an {\em $\alpha$-distance graph} is a graph
$G=(V,E)$ with a vertex set $V=P$ and an edge set 
$$E=\{uv:\|u-v\|_p \leq \alpha; u,v\in P; u\neq v\}.$$
%
%
In other words, points in $P$ are centers of spheres of radius
$\alpha/2$, and $G$ has an edge $uv$ if and only if 
the spheres centered at $u$ and $v$ intersect.
The {\em sphericity} of a graph $G$ in an $L^p$-metric,
denoted by $\sph_p(G)$, is the smallest dimension $d$ such that $G$
is isomorphic to some $\alpha$-distance graph in a $d$-dimensional
$L^p$-metric, for some constant $\alpha>0$.
The sphericity of a graph in the $L^{\infty}$-metric is known as {\em cubicity}.
A notion closely related to sphericity is {\em contact-dimension},
which is defined in the same manner except that 
the spheres representing $G$ must be non-overlapping.
To be precise, an {\em $\alpha$-contact graph} $G=(V,E)$ of a point-set $P$ is 
an $\alpha$-distance graph of $P$ such that every edge $uv$ of $G$ 
has the same distance (i.e., $\|u-v\|_p=\alpha$).
Thus, $G$ has the vertex set $V=P$ and has an edge set $E$ such that 
$$
\forall uv\in E, \quad\|u-v\|_p=\alpha \quad\mbox{and}\quad\forall uv\not\in E,\quad
\|u-v\|_p>\alpha.
$$
The contact-dimension of a graph $G$ in the $L^p$-metric, 
denoted by $\cd_p(G)$, is the smallest integer $d \geq 1$ such that 
$G$ is isomorphic to a contact-graph in the $d$-dimensional $L^p$-metric.
We will use distance and contact graphs to mean 
$1$-distance and $1$-contact graphs.

We are interested in determining the sphericity and the contact-dimension
of the biclique $K_{n,n}$ in various $L^p$-metrics.
For notational convenience, we denote $\sph_p(K_{n,n})$ by
$\bisph(L^p)$,
the {\em biclique sphericity} of the $L^p$-metric,
and denote $\cd_p(K_{n,n})$  by
$\bicd(L^p)$,
the {\em biclique contact-dimension} of the $L^p$-metric.
We call a pair of point-sets $(A,B)$ \emph{polar} if it is 
the partition of the vertex set of a contact graph isomorphic to $K_{n,n}$. 
More precisely, a pair of point-sets $(A,B)$ is polar in an $L^p$-metric if
there exists a constant $\alpha>0$ such that every inner-pair $u,u'\in A$ (resp., $v,v'\in B$)
has $L^p$-distance greater than $\alpha$ while every crossing-pair $u\in A,v\in B$ has $L^p$-distance exactly $\alpha$.

The biclique sphericity and contact-dimension of the $L^2$ and $L^{\infty}$ metrics
are well-studied in literature (see \cite{R69cubicity,M84,M85-contact-pattern,FM88,M91,BL05}).
Maehara \cite{M91,M84} showed that $n < \bisph(L^2) \leq (1.5)n$,
and Maehara and Frankl \& Maehara \cite{M85-contact-pattern, FM88} showed that $(1.286)n-1 < \bicd(L^2) < (1.5)n$.
For cubicity, Roberts \cite{R69cubicity} showed that
$\bicd(L^{\infty})=\bisph(L^{\infty})=2\log_2{n}$. 
Nevertheless, for other $L^p$-metrics, contact dimension and sphericity are not well-studied.

\subsection{Our Results and Contributions}
\label{sec:results}

Our main conceptual contribution is connecting the complexity of the (monochromatic) closest pair problem (\closest) to that of the bichromatic closest pair problem (\bcp) through the contact dimension of the biclique. This is discussed in subsection~\ref{conceptual}. Our main technical contributions are bounds on the contact dimension and sphericity of the biclique for various $L^p$-metrics. This is discussed in subsection~\ref{technical}. Finally, as an application of the connection discussed in subsection~\ref{conceptual} and the bounds discussed in subsection~\ref{technical}, we show computational equivalence between monochromatic and bichromatic closest pair problems. 

\subsubsection{Connection between \closest and \bcp}\label{conceptual}\label{sec:connection}

In \closest, we are given a point-set of cardinality $m$ as input and our goal is to find a pair of distinct points in the set with minimum distance.
\bcp is a generalization of \closest, in which we are given as input a set of $m/2$ red points and a set of $m/2$ blue points,
and the goal is to find a pair of red-blue points (i.e., bichromatic pair)
with minimum distance\footnote{Both these problems are described here as search problems. However, our results also hold for their decision counterparts. In the decision versions, we are given additionally a real number $R>0$ as part of the input, and the goal is to determine if there exists a pair of points whose distance is at most $R$.}.
It is not hard to see that \bcp is at least as hard as \closest
since we can apply an algorithm for \bcp to solve \closest with a similar asymptotic running time.
However, it is not clear whether the other direction is true.

We will give a simple reduction from \bcp to \closest using a polar-pair of point-sets.
First, take a polar-pair $(A,B)$, each with cardinality $n=m/2$, in the $L^p$-metric.
Next, pair up vectors in $A$ and $B$ to the red and blue points (the inputs to \bcp), respectively,
and then concatenate a vector $u\in A$ (resp., $v\in B$) to its matching red (resp., blue) point.
This reduction increases the distances between every pair of points,
but by the definition of the polar-pair, this process has more effect on the distances of
the {\em monochromatic} (i.e., red-red or blue-blue) pairs
than that of {\em bichromatic pairs},
and the reduction, in fact, has no effect on the order of crossing-pair distances at all.
By scaling the vectors in $A$ and $B$ appropriately,
this gives an instance of \closest whose closest pair of points is bichromatic.
Consequently, provided that the polar-pair of point-sets $(A,B)$ in a $d$-dimensional metric can
be constructed within a running time at least as fast as the time 
for computing \closest in the same metric,
an algorithm for \closest can be used to solve \bcp in the same asymptotic running time. In other words, 
this gives a reduction from \bcp to \closest with an increase in dimension by $\bicd(L^p)$, and thus if we had a running time lower bound for \bcp, it implies the same running time lower bound for \closest when $d=\Omega(\bicd(L^p))$.


\subsubsection{Bounds on Contact Dimension and Sphericity of Biclique}\label{technical}

Our main technical results are lower and upper bounds on
the biclique contact-dimension for the $L^p$-metric space where $p\in \mathbb{R}_{\ge 1}\cup\{0\}$ ($L^0$ is the Hamming metric).

\begin{theorem}
\label{thm:main-all-dim-results}
The following are upper and lower bounds on
biclique contact-dimension for the $L^p$-metric.
\begin{align}
  \bisph(L^0) = \bicd(L^0) = n
\label{eq:L0-bounds}\\
n \leq \bisph(L^0_{\{0,1\}})) \leq \bicd(L^0_{\{0,1\}})
  \leq n^2
  && \mbox{(i.e., $P\subseteq \{0,1\}^d$)}
\label{eq:L0-bin-bounds}\\
\Omega(\log n) \leq \bisph(L^1) \leq \bicd(L^1) \leq n^2
\label{eq:L1-bounds}\\
\Omega(\log n) \leq \bisph(L^p) \leq \bicd(L^p) \leq 2n
  && \mbox{for $p\in (1,2)$}
\label{eq:L2-bounds}\\
  \bisph(L^p) = \Theta(\bicd(L^p)) = \Theta(\log n)
  && \mbox{for $p>2$} \label{eq:Lp-bounds}
\end{align}
\end{theorem}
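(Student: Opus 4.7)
The plan is to prove the five bounds separately. Since $\bisph(L^p) \le \bicd(L^p)$ holds trivially, it suffices to upper-bound $\bicd$ by explicit constructions of polar pairs and lower-bound $\bisph$ by packing or combinatorial arguments.

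For the upper bounds, I would exhibit polar pairs of the prescribed dimension. For $\bicd(L^0) \le n$, take $A = \{(i,i,\ldots,i) : i \in [n]\} \subset \mathbb{R}^n$ and let $B$ be the $n$ rows of any Latin square of order $n$: every cross pair agrees in exactly one coordinate (since each row contains the value $i$ exactly once), giving Hamming distance $n-1$, while distinct same-side pairs disagree in all $n$ coordinates (constant vectors are pointwise distinct; two rows of a Latin square differ in every column), giving Hamming distance $n$. For the binary upper bound $n^2$, which also yields the $L^1$ upper bound since $L^1$ agrees with Hamming on $\{0,1\}$-vectors, use the $n \times n$ grid: identify $A$ with the row-indicator vectors and $B$ with the column-indicator vectors in $\{0,1\}^{n^2}$, so that each cross pair shares exactly one cell (distance $2n-2$) while same-side pairs share no cell (distance $2n$). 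For $L^p$ with $p \in (1,2)$, I would build a polar pair in dimension $2n$ by placing $A$ and $B$ in two blocks of $n$ coordinates with carefully tuned scalings, chosen so that the $L^p$ length of a single large coordinate displacement matches that of several smaller displacements; this reduces to a one-parameter equation in $p$. For $L^p$ with $p > 2$, the $O(\log n)$ upper bound uses a Roberts-style hypercube embedding: encode each of the $2n$ vertices by a binary codeword of length $O(\log n)$ and exploit that for large $p$ the $L^p$ norm is dominated by the largest coordinate, so the gap between cross and within distances can be opened.

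For the $\bisph$ lower bounds in the $p \ge 1$ regime, I would use a packing argument: for any fixed $a \in A$, all of $B$ lies in the $L^p$-ball of radius $\alpha$ around $a$, yet the $n$ points of $B$ are pairwise at distance strictly greater than $\alpha$, so a standard doubling/volume argument in $L^p$ yields $n \le c^d$ for some constant $c > 1$, hence $d = \Omega(\log n)$. The stronger lower bound $\bisph(L^0) \ge n$ requires a combinatorial argument tailored to Hamming space. Each coordinate $k \in [d]$ induces a bipartite agreement relation on $A \times B$ that splits into complete bipartite blocks indexed by coordinate value; the cross-pair agreements must sum to exactly $n^2(d-\alpha)$, while the within-side condition restricts how many $a_i$ (resp.\ $b_j$) can share a common value in any coordinate. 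Viewing each coordinate's contribution as a rank-$1$ matrix on matching $(i,j)$ pairs and summing the ranks yields a bound of the form $d \cdot n \ge n^2$, forcing $d \ge n$.

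The main technical obstacle is the $L^p$ construction for $p \in (1,2)$: unlike the clean combinatorial templates for $L^0$, for binary vectors, and for large $p$, the intermediate regime requires precise balancing of the $L^p$-contribution of one tall coordinate against many short ones, and one must verify that all cross distances come out exactly equal. A secondary obstacle is the matching $O(\log n)$ upper bound for $p > 2$, where the hypercube codeword construction's within/cross gap becomes delicate as $p$ approaches $2$ from above.
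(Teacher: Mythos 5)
Your constructions for the upper bounds essentially coincide with the paper's: the Latin-square/cyclic-rotation pair for $\bicd(L^0)\le n$, the grid-indicator pair for the binary and $L^1$ bounds (the paper instead pushes its $n$-dimensional construction through a unary encoding, but the result is the same), the two-block ``one tall coordinate versus many short ones'' template for $p\in(1,2)$, and a code-based embedding for $p>2$; likewise your packing argument for the $\Omega(\log n)$ lower bounds is the paper's $\epsilon$-net argument. The two places where you leave the key inequality unverified are exactly where the paper does real work: for $p\in(1,2)$ one takes $A$-points of the form $(e_i,\alpha\mathbf{1})$ and $B$-points of the form $(\alpha\mathbf{1},e_i)$ and must check $(1-\alpha)^p+(n-1)\alpha^p<1$, which holds once $\alpha<n^{-1/(p-1)}$ (so that $n\alpha^{p}<\alpha$ and $(1-\alpha)^p\le 1-\alpha$); for $p>2$ one uses codewords in $\{-1,1\}^{d/2}$ of relative distance at least $1/2-\delta$, padded with zeros in complementary blocks, so that cross pairs differ by exactly $1$ in every coordinate (distance exactly $d^{1/p}$, as required for a contact graph) while inner pairs differ by $2$ on a $(1/2-\delta)$-fraction of $d/2$ coordinates; the needed inequality $(1/2-\delta)\,2^{p-1}>1$ is precisely what fails at $p=2$. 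These verifications are routine but must be written out.

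The genuine gap is in your lower bound $\bisph(L^0)\ge n$. You claim that the within-side condition ``restricts how many $a_i$ can share a common value in any coordinate,'' so that each coordinate's agreement matrix has few ones and summing over coordinates gives $d\cdot n\ge n^2$. This per-coordinate claim is false: the condition $\|a-a'\|_0\ge\alpha$ for all inner pairs is perfectly consistent with every point of $A$ (and of $B$) taking the same value in some fixed coordinate, in which case that coordinate's agreement matrix is the all-ones matrix with $n^2$ ones. The within-side condition only controls value-class sizes in aggregate over all coordinates. The paper's fix is an averaging argument (its ``no distribution for $L^0$'' lemma): after a unary binarization, each coordinate contributes the non-positive quantity $-2(\rho^A_1-\rho^B_1)^2$ to $\mathbb{E}[\mathrm{inner}_A]+\mathbb{E}[\mathrm{inner}_B]-2\,\mathbb{E}[\mathrm{cross}]$, so this combination can never be positive; one then takes uniform distributions on $A$ and $B$, notes that the diagonal (same-point) terms only cost a factor $1-1/n$ on the inner expectations, and uses integrality of $L^0$-distances ($\alpha-\beta\ge 1$ and $\alpha\le d$) to conclude that $d<n$ would force the combination to be positive. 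You could equivalently repair your counting argument with Cauchy--Schwarz, bounding $\sum_{k,v}N^A_k(v)N^B_k(v)$ by $\bigl(\sum_{k,v}N^A_k(v)^2\bigr)^{1/2}\bigl(\sum_{k,v}N^B_k(v)^2\bigr)^{1/2}$ and controlling the right-hand side by the aggregate inner-distance condition, but either way this aggregate/averaging step is the missing idea.
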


Note that $\bisph(\Delta) \leq \bicd(\Delta)$ for any metric $\Delta$.
Thus, it suffices to prove a lower bound for $\bisph(\Delta)$ and
prove an upper bound for $\bicd(\Delta)$.

We note that the bounds on the sphericity and the contact dimension of the $L^1$-metric in (\ref{eq:L1-bounds}) are follow from (\ref{eq:Lp-bounds}) and (\ref{eq:L0-bin-bounds}), respectively.
We are unable to show a strong (e.g., linear) lower bound for the $L^1$-metric.
However, we prove the weaker (average-case) result below for the $L^1$-metric which can be seen as a progress toward proving stronger lower bounds on the sphericity of the biclique in this metric (see Corollary~\ref{lem:LB-point-set-l1-large-gap} for more discussion on its applications).

\begin{theorem}
\label{lem:NoDistL1}
For any integer $d>0$, there exist no two finitely supported random variables $X,Y$ taking values from $\reals^{d}$ such that the following holds.
\begin{align*}
\underset{x_{1},x_{2}\in_{R}X}{\mathbb{E}}\left[\left\Vert x_{1}-x_{2}\right\Vert _{1}\right]+
\underset{y_{1},y_{2}\in_{R}Y}{\mathbb{E}}\left[\left\Vert y_{1}-y_{2}\right\Vert _{1}\right]&>\underset{x_{1}\in_{R}X,\,y_{1}\in_{R}Y}{2\cdot \mathbb{E}}\left[\left\Vert x_{1}-y_{1}\right\Vert _{1}\right]\,.
\end{align*}
\end{theorem}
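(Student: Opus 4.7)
The plan is to establish the sharper inequality
\begin{equation*}
\mathbb{E}_{x_1,x_2\in_R X}[\|x_1-x_2\|_1] + \mathbb{E}_{y_1,y_2\in_R Y}[\|y_1-y_2\|_1] \;\le\; 2\,\mathbb{E}_{x_1\in_R X,\,y_1\in_R Y}[\|x_1-y_1\|_1].
\end{equation*}
This would immediately rule out the hypothesized situation, since strict inequalities in both lines of the theorem would sum to a violation of the above.

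\textbf{Step 1 (Reduction to dimension one).} Because the $L^1$-norm is additive over coordinates, each expectation above is a sum over the $d$ coordinates of the corresponding one-dimensional expectation, where we feed in the marginals $X^{(i)}$ and $Y^{(i)}$ of the $i$-th coordinate. It therefore suffices to prove the inequality in the case $d=1$, and then sum over coordinates.

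\textbf{Step 2 (Layer-cake representation).} For real numbers $a,b$ we have $|a-b| = \int_{\mathbb{R}} \mathbb{1}[t \text{ lies strictly between } a,b]\,dt$. Writing $F_X(t)=\Pr[X\le t]$ and $F_Y(t)=\Pr[Y\le t]$ and taking expectation over independent copies, one obtains
\begin{equation*}
\mathbb{E}[|X_1-X_2|] = 2\int_{\mathbb{R}} F_X(t)(1-F_X(t))\,dt,\qquad \mathbb{E}[|Y_1-Y_2|] = 2\int_{\mathbb{R}} F_Y(t)(1-F_Y(t))\,dt,
\end{equation*}
and
\begin{equation*}
\mathbb{E}[|X_1-Y_1|] = \int_{\mathbb{R}} \bigl[F_X(t)(1-F_Y(t)) + F_Y(t)(1-F_X(t))\bigr]\,dt.
\end{equation*}
These integrals are finite since $X,Y$ are finitely supported.

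\textbf{Step 3 (Pointwise algebraic identity).} Subtracting, the integrand becomes
\begin{equation*}
2F_X(1-F_X) + 2F_Y(1-F_Y) - 2\bigl[F_X(1-F_Y)+F_Y(1-F_X)\bigr] \;=\; -2(F_X-F_Y)^2 \;\le\; 0
\end{equation*}
at every $t$, so the one-dimensional version of the target inequality follows by integration, with equality iff $F_X\equiv F_Y$. Summing over coordinates completes Step~1 and hence the proof.

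There is no real obstacle here: the only subtlety is handling the measure-zero boundary issue in the layer-cake identity when the distributions have atoms, which is easily resolved since finite support means only finitely many $t$-values are affected. The conceptual content is that the $L^1$ metric is of negative type, and the argument above is essentially the direct proof of negative type for one-dimensional $L^1$, combined with the coordinate-wise decomposition.
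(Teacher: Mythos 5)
Your proof is correct, but it takes a genuinely different route from the paper. You prove the stronger ``negative type'' inequality $\mathbb{E}\|X_1-X_2\|_1+\mathbb{E}\|Y_1-Y_2\|_1\le 2\,\mathbb{E}\|X_1-Y_1\|_1$ by decomposing coordinate-wise, passing to the layer-cake/CDF representation, and observing that the integrand of the difference is $-2(F_X-F_Y)^2\le 0$ pointwise; the two strict hypotheses then sum to a contradiction. The paper instead reduces to a single coordinate by an averaging argument and then runs an induction on the size of $\mathsf{supp}(X)\cup\mathsf{supp}(Y)$, moving one support point and exploiting linearity of the resulting expectation gap to shrink the support. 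Your approach buys two things. First, it is quantitative: the deficit is exactly $2\sum_i\int(F_{X_i}-F_{Y_i})^2\,dt$, which identifies the equality case $F_X\equiv F_Y$ coordinate-wise and makes the connection to $L^1$ being of negative type explicit. Second, and more importantly, it is more robust in how it uses the hypotheses: you add the two inequalities \emph{before} reducing to one coordinate, so the one-dimensional statement you must refute is the symmetric one $\mathbb{E}|X_1-X_2|+\mathbb{E}|Y_1-Y_2|>2\,\mathbb{E}|X_1-Y_1|$, which is genuinely impossible. The paper's write-up keeps only the first hypothesis after its opening display and reduces to a coordinate where $\mathbb{E}|X_1-Y_1|<\mathbb{E}|X_1-X_2|$ alone holds --- but that single inequality \emph{is} satisfiable in one dimension (e.g.\ $X$ taking values $0,1$ with probabilities $0.9,0.1$ and $Y$ a point mass at $0$ gives $0.1<0.18$), so the asymmetric reduction cannot lead to the claimed contradiction as stated; your symmetric treatment avoids this pitfall entirely. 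Your handling of atoms (a finite set of exceptional $t$ has measure zero) is also adequate.
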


For an overview on the known bounds on $\bisph$ and $\bicd$ (including the results in this paper),
please see Table~\ref{tab:sphericity-contact-dim-bounds}. 

\begin{table}
\begin{center}
\begin{tabular}{ |p{2.5cm}||p{7.25cm}|p{3cm}|  }
\hline
\centering \textbf{Metric} & \centering  \textbf{Bound} &{\centering \ \ \ \ \ \ \  \textbf{From} }\\
 \hline
$L^0$   & $  \bisph(L^0) = \bicd(L^0) = n$  &  This paper\\
$L^1$&    $\Omega(\log n) \leq \bisph(L^1) \leq \bicd(L^1) \leq n^2$    &This paper\\
$L^p$, $p\in (1,2)$&   $\Omega(\log n) \leq \bisph(L^p) \leq \bicd(L^p) \leq 2n$    &  This paper\\
$L^2$    &   $ n< \bisph(L^2) \leq \bicd(L^2) <1.5 \cdot n$    & \cite{M91,FM88}\\
$L^p$, $p>2$ & $\bisph(L^{p})=\Theta(\bicd(L^{p}))=\Theta(\log{n})$ &This paper\\
$L^\infty$& $\bisph(L^{\infty})=\bicd(L^{\infty})=2\log_2{n}$ &\cite{R69cubicity}\\
 \hline
\end{tabular}
\end{center}
\caption{Known Bounds on Sphericity and Contact Dimension of Biclique}
\label{tab:sphericity-contact-dim-bounds}
\end{table}

In Appendix~\ref{sec:L2-dim}, we give an alternate proof of the linear lower  bound on $\bisph(L^2)$ using spectral analysis similar to that in \cite{BL05}. While our lower bound is slightly weaker than the best known bounds \cite{FM88,M91}, our arguments require no heavy machinery and thus are arguably simpler than the previous works \cite{FM88,M91,BL05}.

Alman and Williams \cite{AW15} showed the subquadratic-time hardness for \bcp in $L^p$-metrics, for all $p\in \mathbb{R}_{\ge 1}\cup\{0\}$,  under the {\em Orthogonal Vector Hypothesis} (OVH). From Theorem~\ref{thm:main-all-dim-results} and the connection between \bcp and \closest described in subsection~\ref{sec:connection}, we have the following hardness of \closest.

\begin{theorem}\label{thm:closestMainIntro}
Let $p>2$. For any $\varepsilon>0$ and $d=\omega(\log n)$,
the closest pair problem in
the $d$-dimensional $L^{p}$-metric admits
no $(n^{2-\varepsilon})$-time algorithm unless
the Orthogonal Vectors Hypothesis is false.
\end{theorem}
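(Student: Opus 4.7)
The plan is to instantiate the \bcp-to-\closest reduction sketched in subsection~\ref{sec:connection} using the upper bound $\bicd(L^p)=O(\log n)$ from equation~(\ref{eq:Lp-bounds}) of Theorem~\ref{thm:main-all-dim-results}, and then invoke the Alman--Williams subquadratic lower bound \cite{AW15} for \bcp to transfer hardness.

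First, I would take a \bcp instance: $n$ red points and $n$ blue points in the $d$-dimensional $L^p$-metric with $d=\omega(\log n)$, for which \cite{AW15} rules out $n^{2-\varepsilon}$-time algorithms under OVH. Using equation~(\ref{eq:Lp-bounds}), I would then construct a polar-pair $(A,B)$ with $|A|=|B|=n$ in dimension $d'=O(\log n)$, with witness constants $\alpha$ (exact crossing distance) and $\beta>\alpha$ (strict lower bound on inner distances within $A$ and within $B$). I would fix arbitrary bijections between the red points and $A$ and between the blue points and $B$, writing $a_r$ for the polar vector assigned to red point $r$ and $b_b$ for the polar vector assigned to blue point $b$.

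The core of the reduction is a scaled concatenation: each red point $r$ becomes $(r,\, s\cdot a_r)\in\mathbb{R}^{d+d'}$ and each blue point $b$ becomes $(b,\, s\cdot b_b)\in\mathbb{R}^{d+d'}$, where $s>0$ is a scaling factor. Since the $L^p$-distance of two concatenations splits as $\bigl(\|\cdot\|_p^p+s^p\|\cdot\|_p^p\bigr)^{1/p}$ on the two blocks, every monochromatic concatenated pair has distance at least $s\beta$, while every bichromatic concatenated pair has distance at most $(M^p+s^p\alpha^p)^{1/p}$, where $M$ is a trivially computable upper bound on the maximum $L^p$-distance between same-color points of the original instance. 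Choosing any $s$ satisfying $s^p(\beta^p-\alpha^p)>M^p$ forces the closest pair in the new $2n$-point set to be bichromatic; furthermore, on bichromatic pairs the new distance is a strictly increasing function of the original distance, so the closest pair of the reduced instance exactly identifies the bichromatic closest pair of the original. An $n^{2-\varepsilon}$-time \closest algorithm therefore gives an $n^{2-\varepsilon}$-time \bcp algorithm, contradicting OVH.

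The main technical point to verify is that the polar-pair construction underlying equation~(\ref{eq:Lp-bounds}) is explicit and can be emitted in $\tilde{O}(n)$ time (certainly within $n^{2-\varepsilon}$), so that the reduction itself fits the time budget; the natural $p>2$ constructions based on suitably scaled $\{\pm 1\}$-coordinates or Hadamard-type vectors satisfy this. A secondary sanity check is that the new ambient dimension $d+d'=\omega(\log n)+O(\log n)=\omega(\log n)$ still lies in the dimension regime covered by \cite{AW15}, so the \bcp lower bound indeed applies to the reduced instance and the hardness transfers to \closest.
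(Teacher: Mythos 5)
Your proposal is correct and is essentially the paper's own argument: the paper proves Theorem~\ref{thm:closestMainIntro} by combining the \bcp-to-\closest reduction sketched in subsection~\ref{sec:connection} (concatenate each red/blue point with a scaled polar vector so that monochromatic distances blow up while bichromatic distances are preserved in order) with the explicit $\widetilde O(n)$-time polar-pair construction of Theorem~\ref{thm:Codes} and the \bcp hardness of \cite{AW15}. Your scaling condition $s^p(\beta^p-\alpha^p)>M^p$ and the dimension/time accounting are exactly the details the paper leaves implicit, so there is nothing to add.
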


We remark here that showing conditional hardness for \closest in the $L^p$ metric for $p\le 2$ remains an outstanding open problem\footnote{The subquadratic-time hardness of \closest in the $L^p$-metric for $p\in \mathbb R_{\ge 1}\cup\{0\}$ was claimed in \cite{ARW17-FOCS17} but later retracted \cite{AR17}.}. Recently, Rubinstein \cite{R17} showed that the subquadratic-time hardness holds even for approximating \bcp:
Assuming OVH, for every $p\in \mathbb R_{\ge 1}\cup \{0\}$ and every $\varepsilon>0$, there is a constant $\gamma(\varepsilon,p)>0$ such that there is no $(1+\gamma)$-approximation algorithm running in time $O(n^{2-\varepsilon})$ for \bcp in the $L^p$-metric.
%
By using the connection between \bcp and \closest described in subsection~\ref{sec:connection} and the bounds in Theorem~\ref{thm:main-all-dim-results} (to be precise we need the efficient construction with appropriate gap as given by Theorem~\ref{thm:Codes}), the hardness of approximation result can be extended to \closest for $L^p$ metrics where $p>2$.

\begin{theorem}\label{thm:closestApproxIntro}
Let $p>2$. For every $\varepsilon>0$ and $d=\omega(\log n)$, there exists a constant $\gamma=\gamma(p,\varepsilon)>0$ such that
the closest pair problem in
the $d$-dimensional $L^{p}$-metric admits
no $(n^{2-\varepsilon})$-time $(1+\gamma)$-approximation algorithm unless
the Orthogonal Vectors Hypothesis is false.
\end{theorem}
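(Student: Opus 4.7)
The plan is to reduce Rubinstein's approximate \bcp hardness to \closest by attaching a polar-pair to the red/blue points, using the efficient constant-gap polar-pair construction in $L^p$ ($p>2$) promised by Theorem~\ref{thm:Codes} (the sharper form of equation~(\ref{eq:Lp-bounds})).

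First I fix $\varepsilon>0$, and let $\gamma_0=\gamma_0(p,\varepsilon)>0$ be the gap given by Rubinstein's approximate \bcp hardness. Start from a \bcp instance on $m$ red points $\{r_i\}$ and $m$ blue points $\{s_j\}$ in dimension $d'=\omega(\log m)$ for which, under OVH, no algorithm running in time $m^{2-\varepsilon'}$ distinguishes the case that some bichromatic pair is at distance at most $\beta$ from the case that every bichromatic pair is at distance at least $(1+\gamma_0)\beta$, for some $\varepsilon'=\varepsilon'(\varepsilon)>0$. Pre-scale so that $\beta$ and the diameter of the point-set are both of order $D$.

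Next, invoke Theorem~\ref{thm:Codes} to produce in $\poly(m)$ time a polar-pair $(A_1,A_2)=(\{a_i\}_{i\in[m]},\{b_j\}_{j\in[m]})\subset\reals^{d_0}$ with $d_0=O(\log m)$, crossing distance exactly $\alpha$, and every inner-pair distance at least $(1+\gamma_1)\alpha$ for a constant $\gamma_1=\gamma_1(p)>0$. Build the \closest instance $P\subset\reals^{d_0+d'}$ of $n=2m$ points by concatenating $R_i:=(a_i,r_i)$ and $S_j:=(b_j,s_j)$. Since the $L^p$-norm decouples across the two coordinate blocks,
$$\|R_i-R_{i'}\|_p^p=\|a_i-a_{i'}\|_p^p+\|r_i-r_{i'}\|_p^p\geq (1+\gamma_1)^p\alpha^p,$$
$$\|R_i-S_j\|_p^p=\alpha^p+\|r_i-s_j\|_p^p\leq \alpha^p+D^p,$$
and analogously for blue-blue pairs. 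Choosing $\alpha$ so that $D=c\,\alpha$ for a constant $c=c(p,\gamma_1)$ small enough that $(1+c^p)^{1/p}<1+\gamma_1$ forces the closest pair in $P$ to be bichromatic. For the induced approximation gap, in a YES instance some bichromatic pair of $P$ has distance $(\alpha^p+\beta^p)^{1/p}$, while in a NO instance every bichromatic pair of $P$ has distance at least $(\alpha^p+(1+\gamma_0)^p\beta^p)^{1/p}$. The induced ratio is
$$\left(1+\frac{((1+\gamma_0)^p-1)\,\beta^p}{\alpha^p+\beta^p}\right)^{1/p},$$
which is a constant $1+\gamma(p,\varepsilon)>1$ because both $\beta/\alpha$ and $\gamma_0$ depend only on $p$ and $\varepsilon$. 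The total dimension is $d_0+d'=\omega(\log n)$, and the reduction itself runs in time $\poly(m)$, so a $(1+\gamma)$-approximation to \closest in time $n^{2-\varepsilon}$ would resolve the \bcp instance in time $m^{2-\varepsilon'}$, contradicting OVH.

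The main obstacle is the simultaneous calibration in the previous step: $\alpha$ must be large enough (relative to $D$) that monochromatic pairs never beat bichromatic pairs, yet not so large that the polar-pair block swamps the original $\beta^p$ term and drives the induced gap to $1$. The closed-form expressions above show that a universal choice $c=c(p,\gamma_1)$ works because $\gamma_1$ is a constant, which is precisely why the $p>2$ regime (equation~(\ref{eq:Lp-bounds})) is essential: for $p\leq 2$, the best upper bounds we have on $\bicd(L^p)$ are polynomial in $n$, so the first coordinate block would blow up and the reduction would cease to be subquadratic.
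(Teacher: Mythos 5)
Your proof follows exactly the route the paper intends (the paper only sketches this argument): attach the polar-pair of Theorem~\ref{thm:Codes}, whose constant multiplicative gap $(2^{p-3}+1/2)^{1/p}>1$ for $p>2$ is precisely what makes the calibration of $\alpha$ possible, to Rubinstein's approximate \bcp instance and let the $L^p$-norm decouple across the two coordinate blocks; the computations check out. The one point to tighten is your pre-scaling step: taking $\alpha=\Theta(D)$ with $D$ the diameter only yields a constant induced gap if $\beta=\Omega(D)$ in the hard \bcp instances (true for Rubinstein's construction, where all pairwise distances are within constant factors of the threshold, but this should be stated --- or sidestepped entirely by observing that you only need the \emph{minimum} NO-case distance, namely $\min\{(1+\gamma_1)\alpha,(\alpha^p+(1+\gamma_0)^p\beta^p)^{1/p}\}$, to exceed $(1+\gamma)(\alpha^p+\beta^p)^{1/p}$, so the choice $\alpha=C\beta$ for a large constant $C=C(p,\gamma_1)$ works and the diameter never enters).
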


We remark that the hardness for the case of the $L^{\infty}$-metric does not follow (at least directly) from \cite{AW15} or \cite{R17}.
For independent interest, we show the subquadratic-time
hardness of \bcp and \closest in the $L^{\infty}$-metric.

\begin{theorem}
\label{thm:closest-hardness-max-norm}
For any $\varepsilon>0$ and $d=\omega(\log n)$,
the closest pair problem in
the $d$-dimensional $L^{\infty}$-metric admits
no $(n^{2-\varepsilon})$-time $(2-o(1))$-approximation algorithm unless
the Orthogonal Vectors Hypothesis is false.  
\end{theorem}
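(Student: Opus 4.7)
The plan is a two-stage reduction from Orthogonal Vectors to approximate \closest in the $L^{\infty}$-metric: first I reduce OV to \bcp with a factor-$2$ gap, then I invoke the polar-pair transformation of Subsection~\ref{sec:connection} to convert the \bcp instance into a (monochromatic) \closest instance while preserving the gap.

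For the first stage, I start from an OV instance $U,V\subseteq\{0,1\}^{m}$ with $|U|=|V|=N$ and $m=\omega(\log N)$, and embed each $u\in U$ as the red point $\phi(u):=u$ and each $v\in V$ as the blue point $\psi(v):=-v$ in $\mathbb{R}^{m}$. Coordinate-wise $|\phi(u)_{i}-\psi(v)_{i}|=u_{i}+v_{i}\in\{0,1,2\}$, and this value equals $2$ exactly at those $i$ with $u_{i}=v_{i}=1$, so after the harmless preprocessing of removing zero vectors I get $\|\phi(u)-\psi(v)\|_{\infty}=1$ when $u\cdot v=0$ and $\|\phi(u)-\psi(v)\|_{\infty}=2$ when $u\cdot v\geq 1$. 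This yields the desired factor-$2$ gap on bichromatic pairs; monochromatic pairs within $\{\phi(u)\}$ or $\{\psi(v)\}$ however have $L^{\infty}$-distance at most $1$, which is why the polar-pair stage is needed.

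For the second stage, I use the explicit polar-pair realizing $\bicd(L^{\infty})=2\log_{2}N$: take $A':=\{-1,+1\}^{\log_{2}N}\times\{0\}^{\log_{2}N}$ and $B':=\{0\}^{\log_{2}N}\times\{-1,+1\}^{\log_{2}N}$ in $\mathbb{R}^{2\log_{2}N}$; a direct check gives all crossing-pair $L^{\infty}$-distances equal to $1$ and all inner-pair distances equal to $2$. Concatenating with the stage-$1$ embedding produces new red and blue points $r_{i}:=(\phi(u_{i}),a'_{i})$ and $b_{i}:=(\psi(v_{i}),b'_{i})$ in $\mathbb{R}^{m+2\log_{2}N}$. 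Using the identity $\|(x,y)\|_{\infty}=\max(\|x\|_{\infty},\|y\|_{\infty})$, red-red and blue-blue distances become $\max(\leq 1,\,2)=2$, while bichromatic $r_{i}$-$b_{j}$ distances become $\max(\|\phi(u_{i})-\psi(v_{j})\|_{\infty},\,1)$, which equals $1$ if $u_{i}\cdot v_{j}=0$ and $2$ otherwise. Hence the minimum pairwise distance over the $n:=2N$ resulting points equals $1$ exactly when $U,V$ contains an orthogonal pair and equals $2$ otherwise.

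The total dimension is $m+2\log_{2}N=\omega(\log N)$, the reduction is polynomial-time, and any $(2-o(1))$-approximation algorithm for \closest running in $O(n^{2-\varepsilon})$ time would distinguish these two cases and hence decide OV in $O(N^{2-\varepsilon})$ time, contradicting OVH. The main subtlety in the plan is securing a polar-pair whose inner-to-crossing ratio is \emph{at least} $2$ in $O(\log N)$ dimensions, rather than merely $>1$; a weaker ratio $c\in(1,2)$ would only yield hardness of $c$-approximation. This is why I use the explicit Cartesian-product polar-pair above rather than relying on an abstract appeal to $\bicd(L^{\infty})=2\log_{2}N$, which on its own gives no control over the ratio.
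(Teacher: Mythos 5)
Your proof is correct, but it takes a genuinely different route from the paper's. The paper gives a single-shot reduction: it encodes $u\in U$ over the alphabet $\{0,2\}$ and $w\in W$ over $\{1,-1\}$, so that crossing pairs have $L^{\infty}$-distance $1$ (orthogonal) or $3$ (non-orthogonal), while the separation of monochromatic pairs to distance at least $2$ comes for free from the alphabet choice together with a duplicate-removal preprocessing step (any two distinct points of $A$ differ by $2$ in some coordinate, and likewise for $B$). You instead follow the modular \bcp-to-\closest template of Subsection~\ref{sec:connection}: first a factor-$2$ gap reduction from \ovec to \bcp via $u\mapsto u$, $v\mapsto -v$, then concatenation with the explicit $L^{\infty}$ polar-pair $\{-1,+1\}^{\log_2 N}\times\{0\}^{\log_2 N}$ versus $\{0\}^{\log_2 N}\times\{-1,+1\}^{\log_2 N}$, whose inner/crossing ratio is exactly $2$. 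Your version buys modularity and robustness (it handles duplicate input vectors automatically, since the gadget coordinates already force inner distance $2$, and it exhibits the polar-pair machinery in action with the tight ratio matching the triangle-inequality barrier); the paper's version buys economy (no extra coordinates, no gadget) at the cost of needing the no-duplicates assumption. One small point to tighten: ``removing zero vectors'' is not quite harmless as stated, since a zero vector is orthogonal to everything and deleting it could destroy the only witness; the correct preprocessing is to first scan for an all-zeros vector and answer YES immediately if one is found. With that one-line fix, your argument is complete.
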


We note that the lower bounds on $\bisph$ act as barriers for gadget reductions from \bcp to \closest. This partially explains why there has been no progress in showing conditional hardness for \closest in the Euclidean metric for $d=\omega(\log n)$ dimensions (as $\bisph(L^2)=\Omega(n)$). 
In addition, Rubinstein noted in \cite{R17} that one obstacle in proving inapproximability results for \closest is due to the triangle inequality -- any two point-sets $A$ and $B$ in \emph{any metric space} cannot have distinct points $a,a'\in A$ and $b\in B$ such that $\|a-a'\|> 2\cdot \max\{\|a-b\|,\|a'-b\|\}$ (as otherwise it would violate the triangle inequality). This rules out the possibility of obtaining the conditional hardness for $2$-approximating \closest for any metric via simple gadget reductions.  
We note that the inapproximability factor of Theorem~\ref{thm:closest-hardness-max-norm} matches the triangle inequality barrier (for the $L^\infty$ metric).

\subsection{Related Works}\label{sec:related}

While our paper studies sphericity and contact-dimension
of the complete bipartite graph,
determining the contact-dimension of a complete graph
in $L^p$-metrics has also been extensively studied in
the notion of {\em equilateral dimension}.
To be precise, the equilateral dimension of a metric $\Delta$
which is the maximum number of equidistant points that can be
packed in $\Delta$.
An interesting connection is in the case of the $L^1$-metric,
for which we are unable to establish a strong lower bound for
$\bisph(L^1)$.
The equilateral dimension of $L^1$ is known to be at least $2d$,
and this bound is believed to be tight~\cite{GUY83-KusnerConj}.
This is a notorious open problem known as {\em Kusner's conjecture}, which is confirmed for $d=2,3,4$
\cite{BCL98-eqdim,KLS00-eqdim-d4},
and the best upper bound for $d\geq 5$ is $O(d\log d)$ by Alon and Pudl\'ak
\cite{AP03-eqdim}.
If Kusner's conjecture is true for all $d$, then $\cd_1(K_n)=n/2$.


\begin{sloppypar}The complexity of \closest has been a subject of study for many decades.
There have been a series of developments on
\closest in the Euclidean space (see, e.g.,
\cite{Ben80,HNS88,KM95,SH75,BS76}),
which culminates in a deterministic $O(2^{O(d)}n\log n)$-time
algorithm \cite{BS76} and a randomized $O(2^{O(d)}n)$-time algorithm
\cite{Rabin76,KM95}.
For low (i.e., constant) dimensions, these algorithms are tight as
the matching lower bound of $\Omega(n\log n)$ was shown by Ben-Or
\cite{Ben83} and Yao \cite{Yao91} for the
{\em algebraic decision tree} model,
thus settling the complexity of \closest in low dimensions.
For high dimensions (i.e., $d=\omega(\log n)$), 
there is no known algorithm that runs in time significantly better
than a trivial $O(n^2d)$-time algorithm for general $d$
except for the case that $d \geq \Omega(n)$ whereas there are
subcubic-time algorithms in $L^1$ and $L^{\infty}$ metrics
\cite{GS16,ILLP04}.\end{sloppypar}

\begin{sloppypar}In the last few years, there has been a lot of progress in our understanding of \bcp, \closest, and related problems. Alman and Williams \cite{AW15} showed subquadratic time hardness for \bcp in $d=\omega(\log n)$ dimensions under OVH in the $L^p$ metric for every $p\in\mathbb R_{\ge 1}\cup\{0\}$. Williams \cite{W17diff} extended the result of \cite{AW15} and showed  
the above subquadratic-time hardness for \bcp even for dimensions $d=\omega((\log\log n)^2)$ under OVH. In a recent breakthrough on hardness of approximation in P,\ Abboud et al. \cite{AR17} showed the subquadratic-time hardness for approximating the Bichromatic Maximum Inner Product problem under OVH in the $L^p$ metric for every $p\in\mathbb R_{\ge 1}\cup\{0\}$, and the result holds for almost polynomial approximation factors.
More recently, building upon the ideas in \cite{AR17}, Rubinstein \cite{R17} showed under OVH the inapproximablility of \bcp for every $L^p$-metric for $p\in\mathbb R_{\ge 1}\cup \{0\}$.  \end{sloppypar}

%
%
%

\section{Preliminaries}
\label{sec:prelim}
We use the following standard terminologies and notations.

\paragraph{Metrics.}
\begin{sloppypar}
For any two vectors $a,b\in\mathbb{R}^d$, the distance between them in the $L^p$-metric is denoted by $||a-b||_p~=~\left(\sum_{i=1}^d|a_i-b_i|^p\right)^{1/p}$.
Their distance in the $L^{\infty}$-metric is denoted by 
$||a-b||_{\infty} = \underset{{i\in[d]}}{\max}\ \{|a_i-b_i|\}$, and
in the $L^0$-metric is denoted by 
$||a-b||_0=|\{a_i\neq b_i: i\in [d]\}|$, i.e.,
the number of coordinates on which $a$ and $b$ differ. 
The $L^p$-metrics that are well studied in literature are
the {\em Hamming metric} ($L^0$-metric),
the {\em rectilinear metric} ($L^1$-metric),
the {\em Euclidean metric} ($L^2$-metric), and
the {\em Chebyshev metric}  ($L^{\infty}$-metric). 
\end{sloppypar}

\paragraph{Problems.}
Here we give formal definitions of \closest and \bcp.
In \closest, we are given a collection of points $P\subseteq\reals^d$ in 
a $d$-dimensional $L^p$-metric, and the goal is to find a pair of distinct points $u,v\in P$
that minimizes $\|u-v\|_p$.
In \bcp, the input point-set is partitioned into two color classes 
(the collections of red and blue points) $A$ and $B$,
and the goal is to find a pair of points $u\in A$ and $v\in B$
that minimizes $\|u-v\|_p$.

\paragraph{Fine-Grained Complexity and Conditional Hardness.}
Conditional hardness is the current trend in proving 
running-time lower bounds for polynomial-time solvable problems.
This has now developed into the area of {\em Fine-Grained Complexity}.
Please see, e.g., \cite{Vir18,Williams15,Williams16} and references therein.

The {\em Orthogonal Vectors Hypothesis}  (OVH) is a popular complexity theoretic assumption in Fine-Grained Complexity. 
OVH states that in the Word RAM model with $O(\log n)$ bit words, 
any algorithm requires $n^{2-o(1)}$ time in expectation to determine whether 
collections of vectors $A,B\subseteq\{0,1\}^d$
with $|A|=|B|=n/2$ and $d = \omega(\log n)$
contain an orthogonal pair $u\in A$ and $v\in B$ (i.e.,  $\sum_{i=1}^du_i\cdot v_i=0$).  We emphasize that the scalar product is taken over the field of real numbers and not modulo 2.

Another popular conjecture is 
the {\em Strong Exponential-Time Hypothesis} for SAT (SETH), 
which states that, for every $\varepsilon > 0$, 
there exists an integer $k_{\varepsilon}$ such that $k_{\varepsilon}$-SAT
on $n$ variables cannot be solved in 
$O(2^{(1-\varepsilon)n})$-time.
It was shown by Williams that SETH implies OVH \cite{W05}.
\section{Geometric Representation of Biclique in $L^1$}
\label{sec:L1-possibility}

In this section, we discuss the case of the $L^1$-metric.
As discussed in the introduction, this is the only case where
we are unable to prove neither strong lower bound nor linear upper bound.
A weak lower bound of $\bisph(L^1)\geq\Omega(\log n)$ follows from
the proof for the $L^p$-metric with $p\ge 1$ in 
Section~\ref{sec:Lp-dim:lower-bound} (Theorem~\ref{thm:lower-bound-Lp}),
and a quadratic upper bound $\bicd(L^1)\leq n^2$ follows from 
the proof for the $L^0$-metric 
in Section~\ref{sec:L0-dim:upper-bound} (Corollary~\ref{cor:ub-point-set-l0-binary}).
However, we cannot prove any upper bound smaller than $\Omega(n^2)$ or
any lower bound larger than   $O(\log n)$.
Hence, we study an average case relaxation of the question.

We show in Theorem~\ref{lem:NoDistL1} that there is no distribution
whose expected distances simulate a polar-pair of point-sets in the
$L^1$-metric. 
Consequently, even though we could not prove the biclique sphericity
lower bound for the $L^1$-metric, we are able to refute an existence
of a geometric representation with {\em large gap} for any dimension 
as shown in Corollary~\ref{lem:LB-point-set-l1-large-gap}.
(A similar result was shown in \cite{DM94} for the $L^2$-metric.)

  \renewcommand{\sup}{\mathsf{supp}}
\begin{definition}[$L^1$-distribution]
\label{def:ell1-distribution} For any $d>0$, let $X,Y$
be two random variables taking values from $\reals^{d}$. An $L^{1}$-distribution
is constructed by $X,Y$ if the following holds.
\begin{align}
\underset{x_{1},x_{2}\in_{R}X}{\mathbb{E}}\left[\left\Vert x_{1}-x_{2}\right\Vert _{1}\right]+\underset{y_{1},y_{2}\in_{R}Y}{\mathbb{E}}\left[\left\Vert y_{1}-y_{2}\right\Vert _{1}\right]>\underset{x_{1}\in_{R}X,\,y_{1}\in_{R}Y}{2\cdot \mathbb{E}}\left[\left\Vert x_{1}-y_{1}\right\Vert _{1}\right]\,.   \label{eq3}
\end{align}
\end{definition}

{
\renewcommand{\thetheorem}{\ref{lem:NoDistL1}}
\begin{theorem}[Restated] 
For any two finitely supported random variables $X,Y$ that are taking
values from $\reals^{d}$, there is no $L^{1}${-distribution.}
\end{theorem}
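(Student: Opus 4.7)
The plan is to prove the strictly stronger inequality
\[ 2\,\E[\|X - Y\|_1] \;\geq\; \E[\|X_1 - X_2\|_1] \;+\; \E[\|Y_1 - Y_2\|_1], \]
where $X_1, X_2$ are independent copies of $X$ and $Y_1, Y_2$ are independent copies of $Y$. Adding the two strict inequalities in the statement of Theorem~\ref{lem:NoDistL1} would then give the reverse strict inequality, a contradiction.

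First I would reduce to the one-dimensional case. Because $\|x - y\|_1 = \sum_{i=1}^d |x_i - y_i|$ and expectation is linear, the $d$-dimensional inequality is obtained by summing over coordinates $i$ the analogous one-dimensional inequality applied to the marginal pair $(X_{(i)}, Y_{(i)})$. Independence of $X_1, X_2$ (resp.\ $Y_1, Y_2$) descends to each marginal, and finite support in $\reals^d$ implies finite support in each coordinate, so it suffices to prove the claim for real-valued finitely-supported random variables.

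Next I would handle the one-dimensional case via the layer-cake representation
\[ |a - b| \;=\; \int_{-\infty}^{\infty} \bigl|\mathbf{1}[a \leq t] - \mathbf{1}[b \leq t]\bigr|\, dt \qquad \text{for all } a, b \in \reals. \]
Let $F, G$ denote the CDFs of $X$ and $Y$. Taking expectations and applying Fubini (which is legal because of the finite-support assumption, so everything becomes a finite sum), one gets
\[ \E[|X - Y|] \;=\; \int_{-\infty}^{\infty} \bigl(F(t) + G(t) - 2F(t)G(t)\bigr)\, dt, \]
\[ \E[|X_1 - X_2|] \;=\; \int_{-\infty}^{\infty} 2F(t)\bigl(1 - F(t)\bigr)\, dt, \qquad \E[|Y_1 - Y_2|] \;=\; \int_{-\infty}^{\infty} 2G(t)\bigl(1 - G(t)\bigr)\, dt. \]
Subtracting the last two from twice the first and simplifying the integrand yields
\[ 2\,\E[|X - Y|] \;-\; \E[|X_1 - X_2|] \;-\; \E[|Y_1 - Y_2|] \;=\; 2\int_{-\infty}^{\infty} \bigl(F(t) - G(t)\bigr)^2\, dt \;\geq\; 0, \]
which is precisely the desired one-dimensional bound.

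In essence, the argument exhibits a conditionally positive-definite kernel structure for the negative $L^1$-distance on $\reals$, and the only conceptual step is recognizing that $L^1$-distance decomposes coordinate-wise, so a one-dimensional analysis suffices. I do not anticipate any serious technical obstacle: once the CDF integral representation is in hand, the remaining steps are routine algebraic manipulations ending in a perfect square.
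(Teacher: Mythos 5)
Your argument is correct, and it reaches the theorem by a genuinely different route from the paper's. Both proofs start with the same coordinate-wise reduction to real-valued, finitely supported variables, but they diverge from there. You establish the non-strict inequality $2\,\E[|X-Y|]\ge \E[|X_1-X_2|]+\E[|Y_1-Y_2|]$ via the layer-cake identity $|a-b|=\int\bigl|\mathbf{1}[a\le t]-\mathbf{1}[b\le t]\bigr|\,dt$, which evaluates the deficit exactly as $2\int (F(t)-G(t))^2\,dt\ge 0$; summing the two hypothesized strict inequalities then yields an immediate contradiction. (Fubini is indeed harmless here: finite support makes every integrand piecewise constant on a bounded interval.) The paper instead extracts a single bad coordinate from the first hypothesized inequality alone and runs an induction on $|\mathsf{supp}(X)\cup\mathsf{supp}(Y)|$, exploiting the piecewise linearity of the objective in the location of one support point to merge it with a neighbor, with a case analysis at support size two. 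Your approach buys brevity, a quantitative strengthening (the exact deficit, i.e., the negative-type property of $(\reals,|\cdot|)$), and — importantly — it makes transparent that \emph{both} hypothesized inequalities are needed: one of them alone is satisfiable, e.g.\ $X$ equal to $0$ with probability $0.9$ and to $1$ with probability $0.1$, and $Y$ the point mass at $0$, give $\E[|X_1-X_2|]=0.18>0.1=\E[|X-Y|]$. This is a point on which the paper's one-sided reduction is delicate, whereas your symmetric inequality handles it cleanly. There is no gap in your proposal.
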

\addtocounter{theorem}{-1}
}
\begin{proof}
Assume towards a contradiction that there exist two finitely supported
random variables $X,Y$ that take values in $\reals^{d}$
and satisfy Eq.~\ref{eq3} of Definition~\ref{def:ell1-distribution}.
Given a vector $x\in\reals^{d}$, we denote by $x\left(i\right)$
the value of the $i$-th coordinate of $x$. Hence, the following inequalities
hold:
\begin{align*}
0 & >2\cdot\underset{\substack{x_{1}\in_{R}X\\ y_{1}\in_{R}Y}}{\mathbb{E}}\left[\left\Vert x_{1}-y_{1}\right\Vert _{1}\right]-\underset{x_{1},x_{2}\in_{R}X}{\mathbb{E}}\left[\left\Vert x_{1}-x_{2}\right\Vert _{1}\right]-\underset{y_{1},y_{2}\in_{R}Y}{\mathbb{E}}\left[\left\Vert y_{1}-y_{2}\right\Vert _{1}\right]\\
 & =2\cdot\underset{\substack{x_{1}\in_{R}X\\ y_{1}\in_{R}Y}}{\mathbb{E}}\left[d\cdot \underset{i\in_{R}\left[d\right]}{\mathbb{E}}\left[|x_1(i)-y_1(i)|\right]\right]-\underset{x_{1},x_{2}\in_{R}X}{\mathbb{E}}\left[d\cdot \underset{i\in_{R}\left[d\right]}{\mathbb{E}}\left[|x_1(i)-x_2(i)|\right]\right]\\
 &\phantom{=2\cdot\underset{\substack{x_{1}\in_{R}X\\ y_{1}\in_{R}Y}}{\mathbb{E}}\left[d\cdot \underset{i\in_{R}\left[d\right]}{\mathbb{E}}\left[|x_1(i)-y_1(i)|\right]\right]\ }-\underset{y_{1},y_{2}\in_{R}Y}{\mathbb{E}}\left[d\cdot \underset{i\in_{R}\left[d\right]}{\mathbb{E}}\left[|y_1(i)-y_2(i)|\right]\right]\\
 & =d\cdot \underset{i\in_{R}\left[d\right]}{\mathbb{E}}\left[2\cdot\underset{\substack{x_{1}\in_{R}X\\ y_{1}\in_{R}Y}}{\mathbb{E}}\left[|x_1(i)-y_1(i)|\right]-\underset{x_{1},x_{2}\in_{R}X}{\mathbb{E}}\left[|x_1(i)-x_2(i)|\right]-\underset{y_{1},y_{2}\in_{R}Y}{\mathbb{E}}\left[|y_1(i)-y_2(i)|\right] \right]
 \,.
\end{align*}
Thus, for some $i^{\star}\in[d]$, the following holds:
\begin{align}
0>2\cdot\underset{\substack{x_{1}\in_{R}X\\ y_{1}\in_{R}Y}}{\mathbb{E}}\left[|x_1(i^*)-y_1(i^*)|\right]-\underset{x_{1},x_{2}\in_{R}X}{\mathbb{E}}\left[|x_1(i^*)-x_2(i^*)|\right]-\underset{y_{1},y_{2}\in_{R}Y}{\mathbb{E}}\left[|y_1(i^*)-y_2(i^*)|\right]\,. \label{reduce}
\end{align}

Fix $i^\star\in [d]$ satisfying the above inequality. 
For the sake of clarity, we assume that the random variables $X,Y$ are taking values in $\reals$ (i.e., projection on the $i^{\star\text{th}}$ coordinate).
We can assume that the size of $\sup\left(X\right)\cup \sup\left(Y\right)$ is greater than 1 because if $\sup\left(X\right)\cup \sup\left(Y\right)$ contains a single point,
then 
$$\underset{x_{1}\in_{R}X,\,y_{1}\in_{R}Y}{\mathbb{E}}\left[\left\Vert x_{1}-y_{1}\right\Vert _{1}\right]=\underset{x_{1},x_{2}\in_{R}X}{\mathbb{E}}\left[\left\Vert x_{1}-x_{2}\right\Vert _{1}\right]=\underset{y_{1},y_{2}\in_{R}Y}{\mathbb{E}}\left[\left\Vert y_{1}-y_{2}\right\Vert _{1}\right]=0,$$ 
contradicting Eq.~\ref{reduce}. 
Let $\sup\left(X\right)\cup \sup\left(Y\right)$ contain $t\ge 2$ points. We prove by induction on $t$, that there are no $X,Y$ over $\reals$ satisfying Eq.~\ref{reduce}. 

The base case is when $t=2$.  Let $\sup\left(X\right)\cup \sup\left(Y\right)=\{a,b\}$ and $p,q\in [0,1]$ be such that $\Pr[a\in X]=p$ and $\Pr[a\in Y]=q$. We have $\underset{x_{1},x_{2}\in_{R}X}{\mathbb{E}}\left[|x_1-x_2|\right]=2p(1-p)|a-b|$ and $\underset{y_{1},y_{2}\in_{R}Y}{\mathbb{E}}\left[|y_1-y_2|\right]=2q(1-q)|a-b|$, and $\underset{\substack{x_{1}\in_{R}X\\ y_{1}\in_{R}Y}}{\mathbb{E}}\left[|x_1-y_1|\right]=\left(p(1-q)+q(1-p)\right)|a-b|$. Substituting the above in Eq.~\ref{reduce}, we have:
$$
0>2\left(q(1-p)+p(1-q)\right) -2p(1-p)-2q(1-q) = 2\cdot (p-q)^2,
$$
a contradiction.

Assume the induction hypothesis that there are no $X,Y$ taking values from $\reals$ satisfying Eq.~\ref{reduce} when the size of $\sup\left(X\right)\cup \sup\left(Y\right)$  is equal to $k\ge 2$.
Then consider the case when $t=k+1\ge 3$.
Sort the points in $\sup\left(X\right)\cup \sup\left(Y\right)$ by their values, and denote by $s_{i}$ the value of the $i$-th point of $\sup\left(X\right)\cup \sup\left(Y\right)$.
For the sake of simplicity, we say that we \emph{change} the value of $s_{t-1}$ to $\tilde{s}_{t-1}$, where $s_{t-2}\le\tilde{s}_{t-1}\le s_{t}$, if after changing its value
we change the values of (at least one of) $X,Y$ to $\tilde{X},\tilde{Y}$
in such a way that the value of the $(t-1)$-th point (after sorting)
of $\sup\left(\tilde{X}\right)\cup \sup\left(\tilde{Y}\right)$ is equal
to $\tilde{s}_{t-1}$ (if $s_{t-2}=\tilde{s}_{t-1}$, then the value of
the $\left(t-2\right)$-th point of $\sup\left(\tilde{X}\right)\cup \sup\left(\tilde{Y}\right)$
is equal to $\tilde{s}_{t-1}$). Define the
function $f:\,\left[s_{t-2},s_{t}\right]\rightarrow\reals$
as follows:  $$f\left(x\right)=2\cdot\underset{\substack{x_{1}\in_{R}\tilde X\\ y_{1}\in_{R}\tilde Y}}{\mathbb{E}}\left[|x_1-y_1|\right]-\underset{x_{1},x_{2}\in_{R}\tilde X}{\mathbb{E}}\left[|x_1-x_2|\right]-\underset{y_{1},y_{2}\in_{R}\tilde Y}{\mathbb{E}}\left[|y_1-y_2|\right]\,,
$$
where $\tilde{X},\tilde{Y}$ are obtained after changing $s_{t-1}$
to $x\in\left[s_{t-2},s_{t}\right]$.
The crucial observation is that the function $f$ is linear.
Hence, either $f\left(s_{t-2}\right)\ge f\left(s_{t-1}\right)$
or $f\left(s_{t}\right)\ge f\left(s_{t-1}\right)$, and we can reduce the size of $\sup\left(X\right)\cup \sup\left(Y\right)$ by 1.
However, this contradicts our induction hypothesis.
\end{proof}

The following corollary refutes the existence of a polar-pair of point-sets
with large gap in any dimension.

\begin{corollary}[No Polar-Pair of Point-Sets in $L^1$ with Large Gap]
\label{lem:LB-point-set-l1-large-gap}
For any $\alpha>0$,
there exist no subsets $A,B\subseteq \reals^d$ of $n/2$ vectors
with $d<n/2$ such that
\begin{itemize}
\item For any distinct $u,v$ both in $A$, or both in $B$, $\|u-v\|_1 \ge \frac{1}{1-2/n} \cdot \alpha$.
\item For any $u\in A$ and $v\in B$, $\|u-v\|_1 < \alpha$.
\end{itemize}
\end{corollary}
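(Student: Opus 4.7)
The plan is to prove Corollary \ref{lem:LB-point-set-l1-large-gap} by contrapositive: suppose such polar-like sets $A,B \subseteq \reals^d$ with gap parameter $\alpha$ exist, and then build from them the pair of finite-supported random variables $X,Y$ forbidden by Theorem~\ref{lem:NoDistL1}. The natural choice is to let $X$ be uniform on $A$ and $Y$ be uniform on $B$ (both supports have size $n/2$).

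The key calculation is to bound the monochromatic expected distance from below. When we sample $x_1,x_2$ independently and uniformly from $A$, we draw the same element with probability exactly $\tfrac{2}{n}$ (contributing $0$) and two distinct elements with probability $1-\tfrac{2}{n}$; by the first bullet of the corollary, each such distinct pair has $L^1$-distance at least $\alpha/(1-2/n)$. Hence
\begin{equation*}
\underset{x_{1},x_{2}\in_{R}X}{\mathbb{E}}\left[\left\Vert x_{1}-x_{2}\right\Vert_{1}\right]
\;\ge\; \left(1-\tfrac{2}{n}\right)\cdot\frac{\alpha}{1-2/n} \;=\; \alpha,
\end{equation*}
and an identical argument gives $\mathbb{E}_{y_1,y_2\in_R Y}[\|y_1-y_2\|_1]\ge \alpha$. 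On the other hand, the second bullet forces every term in the bichromatic expectation to be strictly less than $\alpha$, so $\mathbb{E}_{x_1\in_R X,\,y_1\in_R Y}[\|x_1-y_1\|_1] < \alpha$. Chaining these three inequalities yields both conditions of Definition~\ref{def:ell1-distribution} simultaneously, directly contradicting Theorem~\ref{lem:NoDistL1}.

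There is essentially no hard step; the whole content is the calibration of the gap factor $\tfrac{1}{1-2/n}$, which was chosen precisely so that the probability $\tfrac{2}{n}$ of drawing a coincident monochromatic pair (the only way to dilute the monochromatic expectation) is exactly absorbed. The mild dimension hypothesis $d < n/2$ plays no role in the argument, since Theorem~\ref{lem:NoDistL1} already rules out such random variables in every positive dimension; the corollary therefore holds unconditionally in $d$, and in particular in the regime stated.
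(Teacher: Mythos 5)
Your proof is correct and follows essentially the same route as the paper's: take $X,Y$ uniform on $A,B$, observe that the coincidence probability $2/n$ is exactly absorbed by the gap factor $\tfrac{1}{1-2/n}$ so the monochromatic expectations are at least $\alpha$ while the bichromatic expectation is strictly below $\alpha$, contradicting Theorem~\ref{lem:NoDistL1}. Your added remark that the hypothesis $d<n/2$ is not actually used is accurate.
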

\begin{proof}
Assume towards a contradiction that there exist a polar-pair of point-sets $(A,B)$ in the $L^1$-metric that satisfies the conditions above.
We can create a distribution $X$ and $Y$ such that
\[
\underset{x_{1},x_{2}\in_{R}X}{\mathbb{E}}\left[\left\Vert x_{1}-x_{2}\right\Vert _{1}\right]
= \underset{y_{1},y_{2}\in_{R}Y}{\mathbb{E}}\left[\left\Vert y_{1}-y_{2}\right\Vert _{1}\right]
>
\underset{x\in_{R}X,y\in_{R}Y}{\mathbb{E}}\left[\left\Vert x-y\right\Vert _{1}\right].
\]
To see this, we create a uniform random variable $X$ (resp., $Y$) for the set $A$ (resp., $B$).
Now the expected distance of two independent copies of $X$ (resp., $Y$) is at least $\frac{1}{1-2/n}\alpha\cdot \left(1-\frac{1}{n/2}\right) = \alpha$,
which follows because we may pick the same point twice.
Since the expected distance of the crossing pair $u\in A$ and $v\in B$ is less than $\alpha$.
This contradicts Theorem~\ref{lem:NoDistL1}.
\end{proof}

We can show similar results that there are no polar-pairs of point-sets with large gap in the $L^0$ and $L^2$ metrics.
The case of the $L^0$-metric follows directly from Theorem~\ref{lem:NoDistL1} when the alphabet set is $\{0,1\}$.
(Please also see Lemma~\ref{lem:no-distribution-for-l0} for an alternate proof.)
The case of the $L^2$-metric follows from the fact that $\bisph(L^2) = \Omega(n)$ \cite{FM88,M91}
and that we can reduce the dimension of  polar-pairs of point-sets with constant gap to $O(\log n)$
using dimension reduction \cite{JL84}.
\section{Geometric Representation of Biclique in $L^0$}
\label{sec:L0-dim}

In this section, we prove a lower bound 
on $\bisph(L^0)$ and  an upper bound on $\bicd(L^0)$.
We start by providing a real-to-binary reduction below.
Then we proceed to prove the lower bound on $\bisph(L^0)$
in Section~\ref{sec:L0-dim:lower-bound}
and then the upper bounds on $\bicd(L^0)$ in
Section~\ref{sec:L0-dim:upper-bound}.

\paragraph*{Real to Binary Reduction.}
\label{sec:L0-dim:str-to-bin}

First we prove the following (trivial) lemma, which allows mapping
from vectors in $\reals^d$ to zero-one vectors.

\begin{lemma}[Real to Binary Reduction]
\label{lem:trivial-real-to-binary}
Let $S\subseteq \reals$ be a finite set of real numbers.
Then there exists a transformation
$\phi:S^d\rightarrow\{0,1\}^{d|S|}$
such that, for any $x,y\in S^d$,
\[
\|x-y\|_0 = \frac{1}{2}\cdot\|\phi(x)-\phi(y)\|_0
\]
\end{lemma}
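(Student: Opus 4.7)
The plan is to use a standard one-hot (unary) encoding, block by block across coordinates. First I would enumerate the elements of $S$ as $s_1,\dots,s_{|S|}$, and for each value $s\in S$ let $e(s)\in\{0,1\}^{|S|}$ denote the indicator vector with a $1$ in the position corresponding to $s$ and $0$ elsewhere. Then I would define $\phi:S^d\to\{0,1\}^{d|S|}$ by concatenating these blocks:
\[
\phi(x) \;=\; \bigl(e(x_1),\,e(x_2),\,\ldots,\,e(x_d)\bigr).
\]

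Next I would verify the distance identity coordinate-block by coordinate-block. Fix any $x,y\in S^d$ and consider the $i$-th block of size $|S|$. If $x_i=y_i$, then $e(x_i)=e(y_i)$, so this block contributes $0$ to $\|\phi(x)-\phi(y)\|_0$ and $0$ to $\|x-y\|_0$. If $x_i\neq y_i$, then $e(x_i)$ and $e(y_i)$ disagree in exactly two positions (one $1$ in $e(x_i)$ that is $0$ in $e(y_i)$, and symmetrically one $1$ in $e(y_i)$ that is $0$ in $e(x_i)$); this block contributes $2$ to $\|\phi(x)-\phi(y)\|_0$, while the $i$-th coordinate contributes $1$ to $\|x-y\|_0$. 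Summing over all $d$ blocks gives $\|\phi(x)-\phi(y)\|_0 = 2\|x-y\|_0$, as desired.

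There is no real obstacle here: the finiteness of $S$ is used only to guarantee that the ambient dimension $d|S|$ is finite, and the one-hot encoding is well-defined because each element of $S$ corresponds to a unique coordinate of the block. The proof is essentially a one-line calculation once the encoding is written down, and the factor of $\tfrac{1}{2}$ in the statement is exactly the per-disagreement cost of the one-hot representation.
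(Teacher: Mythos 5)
Your proposal is correct and follows exactly the same route as the paper: a one-hot encoding $\psi$ of each element of $S$, concatenated coordinate-wise to form $\phi$, with the observation that each disagreeing coordinate contributes exactly $2$ to the binary Hamming distance. Nothing is missing.
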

\begin{proof}
First, we order the elements in $S$ in an arbitrary order
and write it as $S=\{r_1,r_2,\ldots,r_{|S|}\}$.
Next we define $\psi:S\rightarrow\{0,1\}^{|S|}$ so that the $i^{\text{th}}$
coordinate of $\psi(r_i)$ is $1$, and the rest are zeroes. 
That is, 
\[
\psi(r_i)_j=\left\{
\begin{array}{ll}
1 & \mbox{if $j=i$}\\
0 & \mbox{otherwise}
\end{array}\right.
\]

Then we define $\phi(x)=(\psi(x_1),\psi(x_2),\ldots,\psi(x_d))$. 
Clearly, $\|\psi(r_i)-\psi(r_j)\|_0=2$ if and only if $r_i\neq r_j$.
Therefore, we conclude that for any $x,y\in S^d$,
\[
\|\phi(x)-\phi(y)\|_0 = 2\cdot \|x-y\|_0.\qedhere
\]
\end{proof}

\subsection{Lower Bound on the Biclique-Sphericity}
\label{sec:L0-dim:lower-bound}

Now we will show that $\bisph(L^0) \geq n$.
Our proof requires the following lemma, which rules out 
a randomized algorithm that generates a polar-pair of point-sets.

\begin{lemma}[No Distribution for $L^0$]
\label{lem:no-distribution-for-l0}
For any $\alpha > \beta \geq 0$,
regardless of dimension,
there exist no distributions $\mathcal{A}$ and $\mathcal{B}$ 
of points in $\reals^d$ with finite supports such that 
\begin{itemize}
\item $\underset{x,x'\in_{R}\mathcal{A}}{\mathbb{E}}[\|x-x'\|_0] \geq \alpha$.
\item $\underset{y,y'\in_{R}\mathcal{B}}{\mathbb{E}}[\|y-y'\|_0] \geq \alpha$.
\item $\underset{\substack{x\in_{R}\mathcal{A}\\ y\in_{R}\mathcal{B}}}{\mathbb{E}}[\|x-y\|_0] \leq \beta$.
\end{itemize}
\end{lemma}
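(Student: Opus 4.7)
The plan is to reduce the claim to a single coordinate by linearity of expectation and then exploit the fact that $L^0$-disagreement on one coordinate is a Bernoulli event, so the argument collapses to a one-line convexity inequality on probability distributions. For each coordinate $i \in [d]$ I would define
\[
A_i = \Pr_{x,x' \sim \mathcal{A}}[x_i \neq x'_i], \quad B_i = \Pr_{y,y' \sim \mathcal{B}}[y_i \neq y'_i], \quad C_i = \Pr_{x \sim \mathcal{A},\, y \sim \mathcal{B}}[x_i \neq y_i].
\]
Writing $\|u-v\|_0 = \sum_i \mathbf{1}[u_i \neq v_i]$ and taking expectations, the three hypotheses translate into $\sum_i A_i \geq \alpha$, $\sum_i B_i \geq \alpha$, and $\sum_i C_i \leq \beta$. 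Since $\alpha > \beta$, adding the first two and applying the third gives $\sum_i(A_i + B_i) > 2\sum_i C_i$, so a simple pigeonhole yields some coordinate $i^\star$ with $A_{i^\star} + B_{i^\star} > 2 C_{i^\star}$.

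The second step is to derive a contradiction from a per-coordinate collision-probability identity. Let $p(v) = \Pr_{\mathcal{A}}[x_{i^\star} = v]$ and $q(v) = \Pr_{\mathcal{B}}[y_{i^\star} = v]$; both are finitely supported by hypothesis. Then
\[
A_{i^\star} = 1 - \sum_v p(v)^2, \qquad B_{i^\star} = 1 - \sum_v q(v)^2, \qquad C_{i^\star} = 1 - \sum_v p(v) q(v),
\]
so the inequality $A_{i^\star} + B_{i^\star} > 2 C_{i^\star}$ rearranges to $\sum_v \bigl(p(v)^2 + q(v)^2 - 2 p(v) q(v)\bigr) < 0$, i.e.\ $\sum_v (p(v) - q(v))^2 < 0$, which is absurd.

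The only point that needs a little care is that we should not try to find a coordinate where both $A_{i^\star} > C_{i^\star}$ and $B_{i^\star} > C_{i^\star}$ hold individually—such a coordinate need not exist. Pigeonhole only delivers $A_{i^\star} + B_{i^\star} > 2 C_{i^\star}$, but this is precisely the inequality that the one-line convexity bound $p(v)^2 + q(v)^2 \geq 2 p(v) q(v)$ refutes, so it is exactly enough. As a sanity check, one could alternatively deduce the lemma from Theorem~\ref{lem:NoDistL1} via the real-to-binary map of Lemma~\ref{lem:trivial-real-to-binary}, since on $\{0,1\}^d$ the $L^0$- and $L^1$-distances agree; however the direct collision-probability argument above is strictly shorter and avoids invoking the harder $L^1$ result.
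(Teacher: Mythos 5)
Your proof is correct and follows essentially the same route as the paper: decompose the $L^0$-distance coordinate-wise, isolate a single violating coordinate, and observe that the per-coordinate quantity $A_i+B_i-2C_i$ equals $-\sum_v(p(v)-q(v))^2\le 0$. The only difference is that the paper first applies its real-to-binary encoding (Lemma~\ref{lem:trivial-real-to-binary}) and then gets the per-coordinate contribution $-2(\rho^A_0-\rho^B_0)^2$, whereas you handle general finite alphabets directly via collision probabilities, which is a mild simplification of the same computation.
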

\begin{proof}
We prove the lemma by contradiction.
Assume to the contrary that such distributions exist.
Then 
\begin{equation}
\label{eq:L0-expected-distance}
\underset{x,x'\in_{R}\mathcal{A}}{\mathbb{E}}[\|x-x'\|_0] +
\underset{y,y'\in_{R}\mathcal{B}}{\mathbb{E}}[\|y-y'\|_0]
-
2\cdot \underset{\substack{x\in_{R}\mathcal{A}\\ y\in_{R}\mathcal{B}}}{\mathbb{E}}[\|x-y\|_0]
> 0.
\end{equation}

Let $A$ and $B$ be supports of $\mathcal{A}$ and $\mathcal{B}$,
respectively. 
By Lemma~\ref{lem:trivial-real-to-binary},
we may assume that vectors in $A$ and $B$ are binary vectors.  
Observe that each coordinate of vectors in $A$ and $B$ 
contribute to the expectations independently.
In particular, Eq.~\eqref{eq:L0-expected-distance}
can be written as 
\begin{equation}
\label{eq:L0-expand-the-sum}
2\sum_{i}\rho^A_{0,i}\rho^A_{1,i}
+ 2\sum_{i}\rho^B_{0,i}\rho^B_{1,i}
- 2\sum_{i}\left(\rho^A_{0,i}\rho^B_{1,i} 
+ \rho^B_{0,i}\rho^A_{1,i}\right)
>0
\end{equation}
where $\rho^A_{0,i}$, $\rho^A_{1,i}$, $\rho^B_{0,i}$ and
$\rho^B_{1,i}$ are the probability that 
the $i^{\text{th}}$ coordinate of $x\in A$ (resp., $y\in B$) is $0$ 
(resp., $1$). 
Thus, to show a contradiction, it is sufficient to consider the
coordinate which contributes the most to the summation in 
Eq.~\eqref{eq:L0-expand-the-sum}.
The contribution of this coordinate to the summation is 
\begin{equation}
\label{eq:L0-one-coordinate-contribution}
2\rho^A_0\rho^A_1 + 2\rho^B_0\rho^B_1 - 
2(\rho^A_0\rho^B_1 + \rho^A_1\rho^B_0)
= 2(\rho^A_0(\rho^A_1-\rho^B_1) +
  2(\rho^B_0(\rho^B_1-\rho^A_1)
= 2(\rho^A_0-\rho^B_0)(\rho^A_1-\rho^B_1)
\end{equation}
Since $\rho^A_0+\rho^A_1=1$ and 
      $\rho^B_0+\rho^B_1=1$,
the summation in Eq.\eqref{eq:L0-one-coordinate-contribution}
can be non-negative only if 
$\rho^A_0=\rho^B_0$ and $\rho^A_1=\rho^B_1$.
But, then this implies that 
the summation in Eq.\eqref{eq:L0-one-coordinate-contribution} is zero.
We have a contradiction since this coordinate contributes
the most to the summation in Eq.~\eqref{eq:L0-expand-the-sum}
which we assume to be positive.
\end{proof}

The next Theorem shows that $\bisph(L^0) \geq n$.

\begin{theorem}[Lower Bound for $L^0$ with Arbitrary Alphabet]
\label{thm:LB-point-set-l0-arbitrary}
For any integers $\alpha>\beta\geq 0$ and $n>0$,
there exist no subsets $A,B\subseteq \reals^d$ of $n$ vectors
with $d<n$ such that 
\begin{itemize}
\item For any distinct $a,a'\in A$, $\|a-a'\|_0 \geq \alpha$.
\item For any distinct $b,b'\in B$, $\|b-b'\|_0 \geq \alpha$.
\item For any $a\in A$ and $b\in B$, $\|a-b\|_0 \le\beta$.
\end{itemize}
\end{theorem}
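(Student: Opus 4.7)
The plan is to reduce the theorem to Lemma~\ref{lem:no-distribution-for-l0} by passing to uniform distributions over the hypothetical polar-pair. Suppose for contradiction that $A = \{a_1,\ldots,a_n\}$ and $B = \{b_1,\ldots,b_n\}$ lie in $\reals^d$ with $d < n$ and satisfy the three bulleted conditions. Let $\mathcal{A}$ and $\mathcal{B}$ denote the uniform distributions supported on $A$ and $B$ respectively.

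For two independent draws $x,x'$ from $\mathcal{A}$, with probability $1/n$ the same index is chosen (contributing distance $0$), while with the remaining probability $(n-1)/n$ a pair of distinct vectors is chosen (contributing distance at least $\alpha$). Hence
\[
\mathbb{E}_{x,x' \in_R \mathcal{A}}\bigl[\|x-x'\|_0\bigr] \;\geq\; \alpha\cdot\frac{n-1}{n},
\]
and the analogous inequality holds for $\mathcal{B}$. The third hypothesis directly gives $\mathbb{E}_{x\in_R\mathcal{A},\,y\in_R\mathcal{B}}[\|x-y\|_0] \leq \beta$. To reach a contradiction via Lemma~\ref{lem:no-distribution-for-l0} applied with the parameters $\alpha_L := \alpha(n-1)/n$ and $\beta_L := \beta$, I need the strict inequality $\alpha(n-1)/n > \beta$, which after rearrangement is equivalent to $n > \alpha/(\alpha-\beta)$.

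The final step is to verify this numerical condition from the assumption $d < n$. Since $\alpha$ and $\beta$ are integers with $\alpha > \beta$, one has $\alpha-\beta \geq 1$ and hence $\alpha/(\alpha-\beta) \leq \alpha$, so it suffices to prove $n > \alpha$. For this I use the trivial ceiling on the $L^0$-norm: for any two vectors $a,a' \in \reals^d$ we have $\|a-a'\|_0 \leq d$, because at most $d$ coordinates can differ. Assuming $n \geq 2$ (the case $n = 1$ being vacuous), $A$ contains at least one pair of distinct points, and the within-set bound then forces $\alpha \leq d$. Combined with $d < n$ this gives $\alpha < n$, supplying the required strict gap so that Lemma~\ref{lem:no-distribution-for-l0} fires.

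The main delicate point I anticipate is precisely this interplay between the diagonal loss of order $\alpha/n$ incurred by uniform sampling and the worst-case gap $\alpha - \beta$: the lemma only forbids a \emph{strict} average-case gap, so the dimension bound $d < n$ is essential for ensuring $\alpha < n$ and hence that the strict inequality survives the averaging. No additional machinery (for instance the real-to-binary reduction of Lemma~\ref{lem:trivial-real-to-binary}) appears to be needed, since Lemma~\ref{lem:no-distribution-for-l0} is already stated for distributions over $\reals^d$ regardless of dimension.
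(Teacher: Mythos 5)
Your proposal is correct and follows essentially the same route as the paper's own proof: pass to uniform distributions on $A$ and $B$, bound the expected inner distance by $\alpha(n-1)/n$, and use integrality of $L^0$-distances together with $\alpha \le d < n$ to keep the gap strict so that Lemma~\ref{lem:no-distribution-for-l0} applies. Your write-up is in fact slightly more explicit than the paper's about why $\alpha(n-1)/n > \beta$ follows from $\alpha - \beta \ge 1$ and $\alpha < n$.
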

\begin{proof}
Suppose for a contradiction that such subsets $A$ and $B$ exist
with $d<n$. 
We build uniform distributions $\mathcal{A}$ and $\mathcal{B}$
by uniformly at random picking a vector in $A$ and $B$, respectively.
Then it is easy to see that the expected value of inner distance
is 
\[
 \underset{x,x'\in_{R}\mathcal{A}}{\mathbb{E}}[\|x-x'\|_0] \geq \alpha-\frac{\alpha}{n}
\]
The inner distance of $B$ is similar.
We know that $\alpha-\beta\geq 1$ because they are integers
and so are $L^0$-distances.
But, then if $\alpha < n$, we would have distributions
that contradict Lemma~\ref{lem:no-distribution-for-l0}.
Note that $\alpha$ and $\beta$ are at most $d$ (dimension).
Therefore, we conclude that $d\geq n$.
\end{proof}


\subsection{Upper Bound on the Biclique Contact-Dimension}
\label{sec:L0-dim:upper-bound}

Now we show that 
$\bicd(L^0)\leq n$.

\begin{theorem}[Upper Bound for $L^0$ with Arbitrary Alphabet]
\label{thm:ub-point-set-l0-arbitrary}
For any integer $n>0$ and $d=n$, 
there exist subsets $A,B\subseteq \reals^d$ each with 
$n$ vectors such that
\begin{itemize}
\item For any distinct $a,a'\in A$, $\|a-a'\|_0 = d$.
\item For any distinct $b,b'\in B$, $\|b-b'\|_0 = d$.
\item For any $a\in A$ and $b\in B$, $\|a-b\|_0 = d-1$.
\end{itemize}
\end{theorem}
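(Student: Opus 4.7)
The plan is to give an explicit construction using a Latin square structure. Working over the alphabet $\{0,1,\dots,n-1\}\subseteq \reals$, set $d=n$ and define
\[
A=\{a_0,a_1,\dots,a_{n-1}\},\qquad a_i=(i,i,\dots,i),
\]
i.e.\ the $n$ constant vectors, and
\[
B=\{b_0,b_1,\dots,b_{n-1}\},\qquad b_k[j]=(j+k)\bmod n \quad\text{for }j=0,\dots,n-1,
\]
so each $b_k$ is a cyclic shift of $(0,1,\dots,n-1)$ and is in particular a permutation of $\{0,\dots,n-1\}$.

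I would then verify the three required distance properties one at a time. First, for $i\neq i'$ the vectors $a_i,a_{i'}$ disagree in every coordinate, so $\|a_i-a_{i'}\|_0=n=d$. Second, for $k\neq k'$ and any coordinate $j$ we have $b_k[j]-b_{k'}[j]\equiv k-k'\pmod n$, which is nonzero since $0<|k-k'|<n$; hence $\|b_k-b_{k'}\|_0=n=d$. Third, for any $a_i\in A$ and $b_k\in B$, the coordinate $j$ is an agreement coordinate iff $i\equiv j+k\pmod n$, which has the unique solution $j=(i-k)\bmod n$ in $\{0,\dots,n-1\}$; therefore $a_i$ and $b_k$ agree in exactly one coordinate and $\|a_i-b_k\|_0=d-1$.

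There is no real obstacle here; the construction is clean and all three conditions reduce to the basic fact that the cyclic additive group $\mathbb{Z}/n\mathbb{Z}$ acts freely on itself. The only tiny subtlety worth remarking on is that the sets $A,B$ do not need to share an alphabet (the theorem is over $\reals^d$), and indeed one could alternatively take $A\subseteq\{0,1,\dots,n-1\}^n$ and translate $B$ by any real vector without affecting the Hamming distances; the choice above is simply the most transparent one. If a binary representation were desired, one could additionally compose with the map $\phi$ of Lemma~\ref{lem:trivial-real-to-binary} to obtain vectors in $\{0,1\}^{n^2}$, which is the route used in the corollary for the binary-alphabet upper bound.
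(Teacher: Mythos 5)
Your proposal is correct and is essentially the same construction as the paper's: constant vectors for $A$ and cyclic rotations of the identity permutation for $B$, with the cross-distance following from each pair agreeing in exactly one coordinate. Your verification is if anything slightly more explicit than the paper's (which only notes ``at least one common coordinate'' where exactly one is what is needed), but the approach is identical.
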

\begin{proof}

First we construct a set of vectors $A$.
For $i=1,2,\ldots,n$, we define the $i^{\text{th}}$ vector $a$ of $A$
so that $a$ is an all-$i$ vector. That is,
\[
a=(i,i,\ldots,i).
\]

Next we construct a set of vectors $B$.
The first vector of $B$ is $(1,2,\ldots,n)$.
Then the $(i+1)^{\text{th}}$ vector of $B$ is the left rotation of the $i^{\text{th}}$
vector. 
Thus, the $i^{\text{th}}$ vector of $B$ is
\[
b=(i,i+1,\ldots,n,1,2,\ldots,i-1).
\]

It can be seen that the $L^0$-distance between any two vectors from the
same set is $d$ because all the coordinates are different.
Any vectors from different set, say $a\in A$ and $b\in B$,
must have at least one common coordinate.
Thus, their $L^0$-distance is $d-1$.
This proves the lemma.
\end{proof}

Below is the upper bound for zero-one vectors,
which is a corollary of Theorem~\ref{thm:ub-point-set-l0-arbitrary}.

\begin{corollary}[Upper Bound for $L^0$ with Binary Vectors]
\label{cor:ub-point-set-l0-binary}
For any integer $n>0$ and $d=n^2$,
there exist subsets $A,B\subseteq \reals^d$ each with $n$ vectors
such that
\begin{itemize}
\item For any distinct $a,a'\in A$, $\|a-a'\|_0 = n$.
\item For any distinct $b,b'\in B$, $\|b-b'\|_0 = n$.
\item For any $a\in A$ and $b\in B$, $\|a-b\|_0 = n-1$.
\end{itemize}
\end{corollary}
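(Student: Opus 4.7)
My plan is a direct combinatorial construction of sets $A, B \subseteq \{0,1\}^{n^2}$ with inner Hamming distance exactly $n$ and cross Hamming distance exactly $n-1$. I explicitly avoid the naive real-to-binary reduction of Lemma~\ref{lem:trivial-real-to-binary}, because applying it to Theorem~\ref{thm:ub-point-set-l0-arbitrary} would inflate all distances by a factor of $2$, yielding $2n$ and $2(n-1)$ rather than the required $n$ and $n-1$.

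The construction I have in mind sets $a_1 = \mathbf{0}$ and picks $a_2, \ldots, a_n$ to be weight-$n$ binary vectors with pairwise intersection $n/2$, so that $\|a_1 - a_i\|_0 = n$ and $\|a_i - a_j\|_0 = 2n - 2(n/2) = n$ for $i,j \geq 2$, $i \neq j$. For $B$, I would take each $b_k$ of weight $n-1$ with $|b_k \cap b_\ell| = n/2 - 1$ for distinct $k, \ell$ and $|a_i \cap b_k| = n/2$ for every $i \geq 2$. A short arithmetic check then gives $\|a_1 - b_k\|_0 = n-1$, $\|a_i - b_k\|_0 = n + (n-1) - 2(n/2) = n-1$ for $i \geq 2$, and $\|b_k - b_\ell\|_0 = 2(n-1) - 2(n/2 - 1) = n$. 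What remains is to realize the required incidence pattern as a set system in $[n^2]$; this is a block-design problem, and the dimension $n^2$ is ample. For instance, for $n = 4$ the weight-$4$ sets $\{1,2,3,4\}, \{1,2,5,6\}, \{3,4,5,6\}$ together with $a_1 = \emptyset$ and the weight-$3$ sets $\{1,3,5\}, \{1,4,6\}, \{2,3,6\}, \{2,4,5\}$ work inside the first $6$ coordinates and pad trivially to $n^2 = 16$.

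The hard part is a parity obstruction for the inner condition on $A$: since $\|a_i - a_j\|_0 \equiv |a_i| + |a_j| \pmod 2$ for binary vectors, requiring $\|a_i - a_j\|_0 = n$ across all of $A$ forces $|a_i| + |a_j| \equiv n \pmod 2$ for every pair $i \neq j$. When $n$ is odd this demands that the weights split into two parity classes with every pair spanning both classes, i.e., that $K_n$ be bipartite, which fails for $n \geq 3$. Thus my direct construction succeeds only for even $n$; for odd $n \geq 3$ the statement as literally written is infeasible. This strongly suggests that the intended proof is simply to invoke Lemma~\ref{lem:trivial-real-to-binary} on Theorem~\ref{thm:ub-point-set-l0-arbitrary} — giving the claimed dimension $n^2$ but distances $(2n, 2(n-1))$ — and that the numbers $(n, n-1)$ in the corollary statement are a minor scaling oversight.
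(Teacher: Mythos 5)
Your diagnosis is exactly right, and it matches what the paper actually does: the paper's proof of this corollary is precisely the two-line argument you predicted, namely apply the map $\phi$ of Lemma~\ref{lem:trivial-real-to-binary} to the arbitrary-alphabet construction of Theorem~\ref{thm:ub-point-set-l0-arbitrary}. Since $\phi$ doubles all $L^0$-distances, the resulting binary point-sets have inner distance $2n$ and crossing distance $2(n-1)$, not the $n$ and $n-1$ stated in the corollary; the paper's proof never addresses this factor of two, so the literal statement is indeed a scaling oversight rather than something the given argument establishes. Your parity argument moreover shows this is not a slip that a cleverer binary construction could repair for all $n$: for binary vectors $\|a-a'\|_0=|a|+|a'|-2|a\cap a'|\equiv |a|+|a'|\pmod 2$, so forcing every inner pair of $A$ to have distance exactly $n$ with $n$ odd would make the weight-parity classes a proper two-colouring of $K_n$, which is impossible for $n\ge 3$. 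That is a genuine and worthwhile observation about the statement as written.

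Two caveats on your own alternative route. First, your direct construction for even $n$ is only verified at $n=4$; the general existence of the required intersection pattern (weight-$n$ sets with pairwise intersections $n/2$, weight-$(n-1)$ sets with pairwise intersections $n/2-1$, and the prescribed cross-intersections) is a block-design existence claim that you assert but do not prove, so it is not a complete proof even in the even case. Second, none of this affects the corollary's role in the paper: it is invoked only to certify $\bicd(L^0_{\{0,1\}})\le n^2$ in Theorem~\ref{thm:main-all-dim-results}, for which only the uniform crossing distance and the strict gap to the inner distances matter, and the pair $(2n,\,2(n-1))$ serves exactly as well as $(n,\,n-1)$. The clean fix is to restate the corollary with distances $2n$ and $2(n-1)$ and keep the paper's proof verbatim, which is the conclusion you reached.
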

\begin{proof}
We take the construction from
Theorem~\ref{thm:ub-point-set-l0-arbitrary}.
Denote the two sets by $A'$ and $B'$,
and denote their dimensions by $d'=n$.

We transform $A'$ and $B'$ to sets $A$ and $B$
by applying the transformation $\phi$ in
Lemma~\ref{lem:trivial-real-to-binary}.
That is,
\[
A=\{\phi(a):a\in A'\} \quad\mbox{and}\quad
B=\{\phi(b):b\in B'\}.
\]
Since the alphabet set in Lemma~\ref{lem:trivial-real-to-binary}
is $[n]$, we have a construction of $A$ and $B$
with dimension $d=n^2$.
\end{proof}


\section{Geometric Representation of Biclique in $L^p$ for $p\in (1,2)$}
\label{sec:Lp-between-1-and-2}

In this section, we prove the upper bound on $\bicd(L^p)$ for $p\in(1,2)$.
We are unable to show any lower bound for these $L^p$-metrics except
for the lower bound of $\Omega(\log n)$ obtained from
the $\epsilon$-net lower bound in Theorem~\ref{thm:lower-bound-Lp} 
(which will be proven in the next Section).

\begin{theorem}[Upper Bound for $L^p$ with $1 < p < 2$]
\label{thm:ub-point-set-lp12}
For every $1 < p < 2$ and for all integers $n\geq 1$,
there exist two sets $A,B \subseteq \reals^{2n}$
each of cardinality $n$ such that the following holds for some $s<2^{1/p}$:
\begin{enumerate}
\item For every distinct points $u,v\in A$, $\|u-v\|_p=2^{1/p}$.
\item For every distinct points $u,v \in B$, $\|u-v\|_p=2^{1/p}$.
\item For every pair of points $u\in A$ and $v\in B$, $\|u-v\|_p=s$.
\end{enumerate}
\end{theorem}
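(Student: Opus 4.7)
The plan is to give an explicit product-type construction. Let $e_1,\dots,e_n$ be the standard basis vectors in $\reals^n$ and let $\mathbf{1}_n\in\reals^n$ denote the all-ones vector. For a parameter $\alpha\in(0,1)$ to be chosen at the end, I will set
\[
A=\{a_i=(e_i,\,\alpha\mathbf{1}_n)\in\reals^{2n}:i\in[n]\},\qquad B=\{b_j=(\alpha\mathbf{1}_n,\,e_j)\in\reals^{2n}:j\in[n]\}.
\]
So $A$ puts its distinguishing basis vector in the first block and a constant in the second block, while $B$ does the opposite. Each set has size $n$ and lives in $\reals^{2n}$.

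Verifying conditions (1) and (2) is immediate and does not depend on $\alpha$: for distinct $i,k\in[n]$,
\[
a_i-a_k=(e_i-e_k,\,0),\qquad b_i-b_k=(0,\,e_i-e_k),
\]
so $\|a_i-a_k\|_p=\|b_i-b_k\|_p=2^{1/p}$. The interesting calculation is the crossing pair, where for every $i,j\in[n]$ one block is $e_i-\alpha\mathbf{1}_n$ and the other is $\alpha\mathbf{1}_n-e_j$. Each block contributes $(1-\alpha)^p+(n-1)\alpha^p$ to $\|a_i-b_j\|_p^p$ (independent of which $i,j$ are chosen), so
\[
\|a_i-b_j\|_p^{p}=f(\alpha):=2(1-\alpha)^{p}+2(n-1)\alpha^{p}.
\]

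It then remains to choose $\alpha$ so that $f(\alpha)<2$. This is the only place where the hypothesis $p>1$ enters. Observe $f(0)=2$, and for $\alpha\in(0,1)$,
\[
f'(\alpha)=-2p(1-\alpha)^{p-1}+2p(n-1)\alpha^{p-1}.
\]
Because $p>1$ the exponent $p-1$ is strictly positive, so $\alpha^{p-1}\to 0$ as $\alpha\to 0^{+}$ while $(1-\alpha)^{p-1}\to 1$. Hence $\lim_{\alpha\to 0^{+}}f'(\alpha)=-2p<0$, and by continuity $f(\alpha)<f(0)=2$ for all sufficiently small $\alpha>0$. Fixing any such $\alpha$ (e.g.\ the global minimizer $\alpha^{\star}=(1+(n-1)^{1/(p-1)})^{-1}$ of $f$ on $[0,1]$) gives $\|a_i-b_j\|_p<2^{1/p}$ for every crossing pair, completing the three conditions.

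The main conceptual obstacle is exactly the gap between monochromatic and crossing distances: one has to pick $\alpha$ carefully enough that the ``spread'' coming from the $(n-1)\alpha^p$ term (which grows with $n$) is dominated by the reduction in the $(1-\alpha)^p$ term, and this balance only works when $p>1$. For $p=1$ the function $f$ is affine in $\alpha$ with slope $2(n-2)\ge 0$ once $n\ge 2$, which explains why the argument breaks down at the boundary and why a fundamentally different construction would be required for the $L^1$ case.
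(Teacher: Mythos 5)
Your construction is identical to the paper's: the paper also takes $A=\{(e_i,\alpha\mathbf{1}_n):i\in[n]\}$ and $B=\{(\alpha\mathbf{1}_n,e_j):j\in[n]\}$, gets $2^{1/p}$ exactly for monochromatic pairs, and reduces condition (3) to showing $(1-\alpha)^p+(n-1)\alpha^p<1$ for a suitable $\alpha\in(0,1)$. Your verification of that last inequality (noting $f(0)=2$ and $f'(\alpha)\to-2p<0$ as $\alpha\to0^+$, which is exactly where $p>1$ is used) is correct and in fact cleaner than the paper's, which instead fixes $\alpha>n^{-1/(p-1)}$ and reasons about the function $x\mapsto(1-\alpha)^x+n\alpha^x$ in a way that is muddled as written (e.g.\ its limit as $x\to1^+$ is $1+(n-1)\alpha>1$, not less than $1$ as claimed).
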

\begin{proof}
We will construct point-sets as claimed in the theorem for given $p$ and $n$.
Let $\alpha$ be a parameter depending on $p$ and $n$,
which will be set later.
For each $i\in [n]$, we create a point $a\in A$ by setting
\[
a_j = \left\{
  \begin{array}{ll}
     0 & \mbox{if $1 \leq j \leq n$ and $j \neq i$}\\
     1 & \mbox{if $1 \leq j \leq n$ and $i = j$}\\
     \alpha & \mbox{if $n+1 \leq j \leq 2n$}
  \end{array}\right.
\]
Similarly, for each $i\in [n]$, we create a point $b\in B$ by setting
\[
b_j = \left\{
  \begin{array}{ll}
     \alpha & \mbox{if $1 \leq j \leq n$}\\
     0 & \mbox{if $n+1 \leq j \leq 2n$ and $j\neq n+i$}\\
     1 & \mbox{if $n+1 \leq j \leq 2n$ and $j = n+i$}
  \end{array}\right.
\]

By construction, for every pair of distinct points $u,v$ both in $A$ or both in $B$,
their $L^p$-distance is  $\|u-v\|_p=2^{1/p}$,
and for every pair of points from different sets, say $u\in A$ and $v\in B$,
their $L^p$-distance is 
\begin{align}
\|u-v\|_p = 2^{1/p}\cdot ((1-\alpha)^p+(n-1)\cdot \alpha^p)^{1/p} \le 2^{1/p}\cdot ((1-\alpha)^p+n\cdot \alpha^p)^{1/p} 
\label{eq:Lp-12-diff-set}
\end{align}

Now we show that when $\alpha =\frac{1}{1+n^{\frac{1}{p-1}}}$ we have  $(1-\alpha)^p+n\cdot\alpha^p<1$ and thus the right hand side in Eq.~\eqref{eq:Lp-12-diff-set} is less than $2^{1/p}$ and the theorem follows.
Define a function $f(x) = (1-x)^p+ n\cdot x^p$. Note that $f'(x)=-p(1-x)^{p-1}+npx^{p-1}$. We have $f(0)=1$ and $f'(0)=-p$. Since $p>1$, we have that $f'$ is continuous and $f'(x)=0$ for $x \in [0,\alpha]$ if any only if $x=\alpha$. Thus, we have $f(\alpha)<f(0)=1$. 
\end{proof}

We note that the above upper bound holds for all $L^p$-metrics when $p>1$. It is just that for $p>2$ we have a better upper bound on $\bicd(L^p)$ (see Theorem~\ref{thm:Codes}).

\section{Geometric Representation of Biclique in $L^p$ for $p>2$}
\label{sec:Lp-dim}

In this section, we show the lower bound on $\bisph(L^p)$
and an upper bound on $\bicd(L^p)$ for $p>2$.
Both bounds are logarithmic.
The latter upper bound is constructive and efficient
(in the sense that the polar-pair of point-sets can be constructed
in $\widetilde O(n)$-time).
This implies the subquadratic-time equivalence between \closest and \bcp.


\subsection{Lower Bound on the Biclique Sphericity}
\label{sec:Lp-dim:lower-bound}


Now we show the lower bound on the biclique sphericity 
of a complete bipartite graph in $L^p$-metrics with $p>2$.
In fact, we prove the lower bound for the case of
a star graph on $n$ vertices, denoted by $S_{n}$, and 
then use the fact that $\bisph(H) \leq \bisph(G)$ for all induced
subgraph $H$ of $G$ (i.e., 
$\bisph(K_{n/2,n/2},L^p) \geq \bisph(S_{n/2},L^p)$).

In short, we show in Theorem~\ref{thm:lower-bound-Lp}
that in $\mathbb{R}^d$, $2^{O(d)}$ is the maximum number of $L^p$-balls of radius
$1/2$ that we can pack in an $L^p$-ball of radius $3/2$ so that no
two of them intersect or touch each other.
This upper bound, in turn, implies the lower bound on the dimension.
We proceed with the proof by volume arguments, which are commonly used in
proving the minimum number of points in an {\em $\epsilon$-net}
that are sufficient to cover all the points in a sphere.

\begin{definition}[$\epsilon$-net]
The unit $L^{p}$-ball in $\reals^{d}$ centered at $o$ is denoted by
$$\mathbb{B}\left(L_{p}^{d},o\right)=\left\{ x\in\reals^{d}\,|\,\left\Vert
x-o\right\Vert _{p}\le1\right\}.$$
For brevity, we write $\mathbb{B}\left(L_{p}^{d}\right)$ to mean $\mathbb{B}\left(L_{p}^{d},\vec{0}\right)$.
Let $\left(X,d\right)$ be a metric space and let $S$
be a subset of $X$ and $\epsilon$ be a constant greater than $0$.
A subset $N_{\epsilon}$ of $X$ is called an {\bf $\epsilon$-net} of $S$
under $d$ if for every point $x\in S$ it holds for some point $y\in
N_{\epsilon}$
that $d\left(x,y\right)\le\epsilon$.
\end{definition}

The following lemma is well known in literature (see, e.g., \cite{vershynin2010}).
For the sake of completeness, we provide a proof below.

\begin{lemma}
\label{lem:nets}
There exists an $\epsilon$-net for $\mathbb{B}\left(L_{p}^{d}\right)$
under the $L^{p}$-metric of cardinality
$\left(1+\frac{2}{\epsilon}\right)^{d}$.
\end{lemma}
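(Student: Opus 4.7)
The plan is to construct the $\epsilon$-net greedily and then bound its cardinality by a volume-packing argument, which is the standard approach for such covering bounds in normed spaces.

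First, I would take $N_\epsilon \subseteq \mathbb{B}(L_p^d)$ to be a \emph{maximal $\epsilon$-separated set}, that is, a maximal subset of $\mathbb{B}(L_p^d)$ such that any two distinct points $x, y \in N_\epsilon$ satisfy $\|x-y\|_p > \epsilon$. Such a set exists by a straightforward greedy construction (or Zorn's lemma for the continuous version). The key observation is that $N_\epsilon$ is automatically an $\epsilon$-net of $\mathbb{B}(L_p^d)$: for any $z \in \mathbb{B}(L_p^d) \setminus N_\epsilon$, if no point of $N_\epsilon$ were within $L^p$-distance $\epsilon$ of $z$, we could adjoin $z$ to $N_\epsilon$ and still preserve the $\epsilon$-separation, contradicting maximality.

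Next, I would bound $|N_\epsilon|$ by a packing argument. Place around each $x \in N_\epsilon$ the open $L^p$-ball $\mathbb{B}(L_p^d, x)$ of radius $\epsilon/2$. Since the centers are pairwise at distance strictly greater than $\epsilon$ in the $L^p$-norm, these smaller balls are pairwise disjoint. Moreover, because each $x \in N_\epsilon$ lies in $\mathbb{B}(L_p^d)$, the triangle inequality shows that every such ball of radius $\epsilon/2$ is contained in the larger ball $\mathbb{B}(L_p^d, o)$ of radius $1 + \epsilon/2$ centered at the origin.

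The final step is to compare volumes in $\reals^d$. Writing $V$ for the $d$-dimensional Lebesgue volume of the unit $L^p$-ball, a standard scaling argument gives that an $L^p$-ball of radius $r$ has volume $r^d \cdot V$. Summing over $N_\epsilon$ and using the disjointness and containment from the previous step yields
\[
|N_\epsilon| \cdot \left(\tfrac{\epsilon}{2}\right)^d \cdot V \;\le\; \left(1 + \tfrac{\epsilon}{2}\right)^d \cdot V,
\]
so that $|N_\epsilon| \le (1 + 2/\epsilon)^d$, as claimed.

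There is no real obstacle here beyond invoking the correct facts; the only subtle point is ensuring that a maximal $\epsilon$-separated set can in fact be taken to be finite (so that volume comparison is well-defined). This follows from the volume comparison itself, since the argument shows any $\epsilon$-separated subset of $\mathbb{B}(L_p^d)$ has size at most $(1+2/\epsilon)^d$, so a maximal one exists and is finite. I would present the proof in this order: construct the maximal separated set, verify the net property by maximality, inflate to disjoint half-radius balls inside the slightly larger ball, and conclude by the $r^d$ scaling of $L^p$-ball volumes.
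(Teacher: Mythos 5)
Your proposal is correct and follows essentially the same route as the paper: take a maximal $\epsilon$-separated subset of $\mathbb{B}\left(L_{p}^{d}\right)$, observe that maximality forces it to be an $\epsilon$-net, pack disjoint balls of radius $\epsilon/2$ inside the ball of radius $1+\epsilon/2$, and conclude by the $r^{d}$ scaling of $L^{p}$-ball volumes. Your added remark on why the maximal separated set is finite is a small but welcome refinement that the paper leaves implicit.
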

\begin{proof}
Let us fix $\epsilon>0$ and choose $N_{\epsilon}$ of maximal
cardinality (i.e., maximal under inclusion) such that $\left\Vert x-y\right\Vert _{p}>\epsilon$ for
all $x\neq y$ both in $N_{\epsilon}$.
We claim that $N_{\epsilon}$ is an $\epsilon$-net of the
$\mathbb{B}\left(L_{p}^{d}\right)$.
Otherwise, there would exist a point $x\in \mathbb{B}\left(L_{p}^{d}\right)$ that
is at least $\epsilon$-far from all points in $N_{\epsilon}$. Thus,
$N_{\epsilon}\cup\left\{ x\right\} $ contradicts the  maximality of
$N_{\epsilon}$. 
After establishing that $N_{\epsilon}$ is an $\epsilon$-net, we note
that by the triangle inequality, we have that the balls of radii
$\epsilon/2$ centered at the points in $N_{\epsilon}$ are disjoint.
On the other hand, by the triangle inequality all such balls lie in
$\left(1+\epsilon/2\right) \mathbb{B}\left(L_{p}^{d}\right)$.
Comparing the volumes gives us that
$\text{vol}\left(\frac{\epsilon}{2} \mathbb{B}\left(L_{p}^{d}\right)\right)\cdot
\left|N_{\epsilon}\right|\le
\text{vol}\left(\left(1+\frac{\epsilon}{2}\right) \mathbb{B}\left(L_{p}^{d}\right)\right)$.
Since $\text{vol}\left(r\cdot
\mathbb{B}\left(L_{p}^{d}\right)\right)=r^{d}\cdot
\text{vol}\left(\mathbb{B}\left(L_{p}^{d}\right)\right)$ for all $r \ge 0$, we conclude that 
$\left|N_{\epsilon}\right|\le\frac{\left(1+\epsilon/2\right)^{d}}{\left(\epsilon/2\right)^{d}}=\left(1+\frac{2}{\epsilon}\right)^{d}$ .
\end{proof}

\begin{theorem}
\label{thm:lower-bound-Lp}
For every $N,d\in \mathbb{N}$, for $p\geq 1$, and for any two sets $A,B\subseteq\reals^{d}$,
each of cardinality $N$, suppose the following holds
for some non-negative real numbers $\alpha$ and $\beta$ with $\alpha>\beta$.
\begin{enumerate}
\item For every distinct $u$ and $v$ both in $A$, $\|u-v\|_p>\alpha$.
\item For every distinct $u$ and $v$ both in $B$, $\|u-v\|_p>\alpha$.
\item For  every $u$ in $A$ and $v$ in $B$, $\|u-v\|_p\le \beta$.
\end{enumerate}
Then the dimension $d$ must be at least $\log_{5}(N)$.

\end{theorem}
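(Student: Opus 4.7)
The strategy is to view $A$ as an $\epsilon$-packing inside a bounded $L^p$-ball and apply the volume argument underlying Lemma~\ref{lem:nets}. Strictly speaking, that lemma only asserts the existence of a small $\epsilon$-net of $\mathbb{B}(L_p^d)$, but its proof really establishes a stronger packing bound: every $\epsilon$-separated subset of $\mathbb{B}(L_p^d)$ has cardinality at most $(1+2/\epsilon)^d$, since the disjoint-ball volume comparison uses only the pairwise separation of the centers and not the maximality of the net. This reinterpretation is the only ingredient beyond what the paper has already assembled.

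Fix any basepoint $u_0\in A$. Hypothesis (3) places every $b\in B$ within $L^p$-distance $\beta$ of $u_0$, and then a one-step triangle inequality through any bridge $b\in B$ places every $a\in A$ within $L^p$-distance $2\beta$ of $u_0$. Translate so that $u_0=0$ and dilate by $1/(2\beta)$; the image $\widetilde{A}$ of $A$ then sits inside $\mathbb{B}(L_p^d)$, and hypothesis (1) upgrades to a strict pairwise separation $\epsilon := \alpha/(2\beta)$, which exceeds $1/2$ because $\alpha > \beta$. Plugging this into the packing bound yields
\begin{equation*}
N \;=\; |\widetilde{A}| \;\leq\; \left(1 + \tfrac{2}{\epsilon}\right)^{d} \;=\; \left(1 + \tfrac{4\beta}{\alpha}\right)^{d} \;<\; 5^{d},
\end{equation*}
where the final strict inequality uses $\alpha > \beta$ once more. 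Taking logarithms base $5$ gives $d > \log_5 N$, as required.

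The main step that merits care is extracting the packing bound from (the proof of) Lemma~\ref{lem:nets}; everything else is bookkeeping with the triangle inequality and a rescaling. A minor subtlety is the choice of enclosing ball: using the tighter ball of radius $\beta$ centered at a point of $B$ (instead of radius $2\beta$ around a point of $A$) would even give the slightly stronger bound $d > \log_3 N$, but the $\log_5 N$ formulation is ample for the intended $\Omega(\log N)$ lower bound and for the downstream application to $\bisph(K_{n/2,n/2})$.
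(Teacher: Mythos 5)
Your argument is correct and is essentially the paper's argument: both reduce the theorem to the volume bound behind Lemma~\ref{lem:nets}. The only differences are cosmetic: the paper uses the lemma as stated (a covering statement), translating a point of $B$ to the origin so that $A\subseteq\mathbb{B}(L_p^d)$ after scaling $\beta$ to $1$, and then observes that each radius-$1/2$ ball of the net contains at most one point of $A$; you instead re-open the lemma's proof to extract the packing bound for $\epsilon$-separated sets and enclose $A$ in a ball of radius $2\beta$ about a point of $A$ via one extra triangle-inequality hop. Your reinterpretation of the lemma is legitimate (the disjoint-balls volume comparison indeed uses only pairwise separation), and both routes land on $N\le 5^d$, so nothing is missing.
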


\begin{proof}
Scale and translate the sets $A,B$ in such a way  that  $\beta=1$  and  that $\vec{0}\in B$.
It follows that $A\subseteq \mathbb{B}\left(L_{p}^{d}\right)$. By Lemma~\ref{lem:nets}, we can fix a $1/2$-net $N_{1/2}$ for $\mathbb{B}\left(L_{p}^{d}\right)$ of size  $5^{d}$. Note that, for every $x\in   N_{1/2}$, the ball $1/2 \cdot \mathbb{B}\left(L_{p}^{d},x\right)$ contains at most one point from $A$. Note also that $N_{1/2}$ covers $\mathbb{B}\left(L_{p}^{d}\right)$. Thus, $\left|A\right|\le 5^d$ which implies that $d\ge \log_{5}(N)$.
\end{proof}


\subsection{Upper Bound on the Biclique Contact-Dimension}
\label{sec:Lp-dim:upper-bound}

We first give a simple randomized construction that gives
a logarithmic upper bound on the biclique contact-dimension of $L^p$.
The construction is simple.
We uniformly at random take a subset $A$ of $n$ vectors from 
$\{-1,1\}^{d/2}\times \{0\} ^{d/2}$
and a subset $B$ of $n$ vectors from 
$\{0\}^{d/2}\times \{-1,1\}^{d/2}$.
Observe that, for any $p>2$, the $L^p$-distance of any pair of
vectors $u\in A$ and $v\in B$ is exactly $d^{1/p}$ while
the {\em expected distance} between the inner pair $u,u'\in A$ 
(resp., $v,v'\in B$) is strictly larger than $d^{1/p}$.
Thus, if we choose $d$ to be sufficiently large, e.g., 
$d\geq 10\ln n$,
then we can show by a standard concentration bound 
(e.g., Chernoff's bound) that the probability that
the inner-pair distance is strictly larger than $d$ is 
at least $1-1/n^{3}$. 
Applying the union bound over all inner-pairs, we have that
the $d^{1/p}$-neighborhood graph of $A\cup B$ is a bipartite complete graph 
with high probability.
Moreover, the distances between any crossing pairs $u\in A$ and $v\in B$
are the same for all pairs.
This shows the upper bound for the contact-dimension of a biclique in
the $L^p$-metric for $p>2$. 

The above gives a simple proof of the upper bound on the 
biclique contact-dimension of the $L^p$-metric.
Moreover, it shows a randomized construction of the polar-pair in
the $O(\log n)$-dimensional $L^p$-metric, for $p>2$,
thus implying that \closest and \bcp are equivalent for 
these $L^p$-metrics.

For algorithmic purposes, we provide a deterministic construction.
One way to derandomize the above process is to use {\em expanders}.
We show it using appropriate codes.

\begin{theorem}\label{thm:Codes}
For any $p> 2$, let $\zeta=2^{p-3}$.
There exist two sets $\left|A\right|=\left|B\right|=n$
of vectors in $\reals^{d}$, where $d=2\alpha \log_2{n}$, for some constant $\alpha \ge 1$,
such that the following holds.
\begin{enumerate}
\item[1.] For all distinct $u,u^\prime\in A$, $\|u-u^\prime\|_p>
  \left((\zeta +1/2)d\right)^{1/p}$.
\item[2.] For all distinct $v,v^\prime\in B$, $\|v-v^\prime\|_p>
  \left((\zeta +1/2)d\right)^{1/p}$.
\item[3.] For all $u\in A,\, v\in B$, $\|u-v\|_p=d^{1/p}$.
\end{enumerate}
Moreover, there exists a deterministic algorithm that
outputs $A$ and $B$ in time $\widetilde O(n)$.
\end{theorem}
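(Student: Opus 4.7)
The plan is to derandomize the randomized construction sketched just before the statement by replacing the uniform random choice of $\pm 1$ vectors with codewords of an explicit binary error-correcting code with large minimum distance, and then placing one code in the first half of the coordinates and the other in the second half.

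Concretely, let $\ell=\alpha\log_2 n$ so that $d=2\ell$. I will pick $\alpha=\alpha(p)$ large enough to guarantee the existence of an efficiently encodable binary code $C\subseteq\{0,1\}^{\ell}$ with $|C|=n$ codewords and relative Hamming distance strictly greater than $\tfrac{1}{4}+\tfrac{1}{2^p}$. Writing $\tau:\{0,1\}\to\{-1,+1\}$ for the map $0\mapsto -1,\ 1\mapsto +1$, I set
\begin{equation*}
A=\bigl\{(\tau(c),\vec 0)\in\reals^{2\ell}:c\in C\bigr\},\qquad B=\bigl\{(\vec 0,\tau(c'))\in\reals^{2\ell}:c'\in C'\bigr\},
\end{equation*}
where $C'$ is another code with the same parameters (for concreteness one may take $C'=C$, since the two sets live in disjoint coordinate blocks).

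For the crossing pairs, every $u\in A$ and $v\in B$ satisfy $\|u-v\|_p^p=\ell\cdot 1+\ell\cdot 1=d$, giving property~3 with exact equality. For an inner pair $u,u'\in A$ coming from codewords $c,c'$, disagreement in coordinate $i$ contributes $|(\pm 1)-(\mp 1)|^p=2^p$, whereas agreement contributes $0$; hence $\|u-u'\|_p^p=2^p\cdot\mathrm{dist}_H(c,c')>2^p\cdot(\tfrac{1}{4}+\tfrac{1}{2^p})\ell=(2^{p-2}+1)\ell=(\zeta+\tfrac12)\,d$, as required, and the same holds symmetrically for $B$. Thus properties~1 and~2 reduce precisely to the distance guarantee of the code.

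For the algorithmic claim, I will instantiate $C$ with a standard code that has linear-time encoding: for instance, a Reed--Solomon code over $\mathbb{F}_q$ with $q=\Theta(\log n)$ concatenated with a binary inner code (Hadamard or a small code meeting the Gilbert--Varshamov bound by brute force, since the inner code has only $O(\log\log n)$ bits). The concatenated code is binary of length $\widetilde O(\log n)$ with $n$ codewords and relative distance arbitrarily close to $\tfrac12$ by choosing parameters, comfortably exceeding $\tfrac14+\tfrac{1}{2^p}$ once $\alpha$ is large enough; enumerating all $n$ codewords costs $\widetilde O(n)$ total time, and writing down $A,B$ costs another $O(nd)=\widetilde O(n)$.

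The only real obstacle is choosing the code so that its relative distance strictly exceeds $\tfrac14+2^{-p}$ \emph{and} its length is $O(\log n)$ \emph{and} listing all $n$ codewords takes only $\widetilde O(n)$ time. Explicit asymptotically-good codes with constant rate and relative distance approaching $\tfrac12$ (such as the Reed--Solomon/Hadamard concatenation above, or binary BCH codes of designed distance, or Justesen codes) satisfy all three constraints, so this is a matter of bookkeeping rather than a genuine difficulty; the delicate point is tracking the dependence of $\alpha$ on $p$ as $p\to 2^+$, where the target relative distance $\tfrac14+2^{-p}$ approaches $\tfrac12$ and the rate of any such code is forced to shrink.
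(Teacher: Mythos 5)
Your proposal is correct and follows essentially the same route as the paper: encode $n$ points via an explicit binary code of length $O(\log n)$ with relative distance exceeding $\tfrac14+2^{-p}$, pad with zeros in complementary coordinate blocks, and compute that crossing pairs differ by $1$ in every coordinate while inner pairs differ by $2$ on a $(\tfrac14+2^{-p})$-fraction of coordinates; the paper instantiates the code with Justesen-type constructions, which is one of the options you list (your Reed--Solomon/Hadamard variant would give length $\Theta(\log^2 n)$ and needs the GV or Justesen inner-code variant to hit $d=O(\log n)$, but as you say this is bookkeeping).
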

\begin{proof}
In literature, we note that for any constant $\delta>0$, there is an
explicit binary code of (some) constant {\em relative rate} and {\em relative distance} at
least $\frac{1}{2}-\delta$ and the entire code can be listed in
quasilinear time with respect to the size of the code (see Appendix
E.1.2.5 from \cite{G09}, or Justesen codes \cite{J72}). 
To be more specific, we can construct in $O(n\log^{O(1)}n)$-time
a set $C\subseteq\{-1,1\}^{d'}$ 
such that 
(1) $|C| = n$, 
(2) $d'=d/2=\alpha \log_2{n}$ for some constant $ \alpha\ge 1$
and (3) for every two vectors $x,y\in C$, $x$ and $y$ differ on at least
$\left(\frac{1}{2}-\delta\right)d'$ coordinates, for some
constant $\delta \in\left(0,\frac{1}{4}-\frac{1}{2^p}\right)$.

We construct the sets $A$ and $B$ as subsets of
$\{-1,0,1\}^{d}$. For every $i\in [n]$, the
$i^{\text{th}}$ point of $A$ is given by the concatenation of the
$i^{\text{th}}$ point of $C$ with $0^{d'}$. Similarly,
the $i^{\text{th}}$ point of $B$ is given by the concatenation of
$0^{d'}$ with the $i^{\text{th}}$ point of $C$ (note the
reversal in the order of the concatenation).
In particular, points in $A$ and $B$ are of the form 
$(x_i,{\vec{0}})$ and $({\vec{0}},x_i)$, respectively,
where $x_i$ is the $i^{\text{th}}$ point in $C$ and 
${\vec{0}}$ is the zero-vector of length $\alpha \log_2  n$.

First, consider any two points in the same set, 
say $u,u'\in A$ (resp., $v,v'\in B$).
We have from the distance of $C$ that on at least 
$\left(\frac{1}{2}-\delta\right)d'$ coordinates
the two points differ by $2$,
thus implying that their $L^p$-distance is at least
\begin{align*}
\left(\left(\frac{1}{2}-\delta\right){d'}2^p\right)^{1/p}
> \left(\left(\frac{1}{4}+\frac{1}{2^p}\right){d'}2^p\right)^{1/p}
= \left(\left(2^{p-3}+\frac{1}{2}\right){d}\right)^{1/p}.
\end{align*}
This proves the first two items of the theorem.
Next we prove the third item.
Consider any two points from different sets, say $u\in A$ and $v\in B$.
It is easy to see from the construction that 
$u$ and $v$ differ in every coordinate by exactly $1$.
Thus, the $L^p$-distance between any two points from
different set is exactly
\[
\left(2d'\right)^{1/p}=d^{1/p}.\qedhere
\]
\end{proof}

\section{Fine-Grained Complexity of \closest in $L^{\infty}$}
\label{sec:connection-to-closest-pair}

In this section, we prove the quadratic-time hardness of 
\closest in the $L^{\infty}$-metric.
%
Our reduction is from the {\em Orthogonal Vectors} problem (\ovec),
which we phrase as follows.
Given a pair of collections of vectors $U,W\subseteq\{0,1\}^d$,
the goal is to find a pair of vectors $u\in U$ and $w\in W$
such that $(u_i,w_i)\in \{(0,0),(0,1),(1,0)\}$ for all $i\in[d]$.
Throughout, we denote by $n$ the total number of vectors
in $U$ and $W$.

\subsection{Reduction}
\label{sec:base-reduction}

Let $U,W\subseteq\{0,1\}^d$ be an instance of \ovec.
We may assume that $U$ and $W$ have no duplicates.
Otherwise, we may sort vectors in $U$ (resp., $W$)
in lexicographic order and then sequentially remove duplicates;
this preprocessing takes $O(dn\log n)$-time.

We construct a pair of sets $A,B\subseteq\reals^d$ of \bcp
from $U,W$ as follows.
For each vector $u\in U$ (resp., $w\in W$),
we create a point $a\in A$ (resp., $b\in B$)
such that 

\begin{minipage}{0.48\textwidth}
$$a_j=\begin{cases}
0& \text{if }u_j=0,\\
2& \text{if }u_j=1.
\end{cases}$$
\end{minipage} 
\begin{minipage}{0.48\textwidth}
$$b_j=\begin{cases}
1& \text{if }w_j=0,\\
-1& \text{if }w_j=1.
\end{cases}$$
\end{minipage} \vspace{0.1cm}

Observe that, for any vectors $a\in A$ and $b\in B$, 
$|a_j-b_j|=3$ only if $u_j=w_j=1$; otherwise,
$|a_j-b_j|=1$.
It can be seen that $\|a-b\|_p=d$ if and only if 
their corresponding vectors $u\in U$ and $w\in W$
are orthogonal.
Thus, this gives an alternate proof for the
quadratic-time hardness of $\bcp$ under OVH.

\subsection{Analysis}
\label{sec:max-norm-closest-pair}

Here we show that the reduction in Section~\ref{sec:base-reduction}
rules out both exact and $2$-approximation algorithm 
for \closest in $L^{\infty}$ 
that runs in subquadratic-time 
(unless OVH is false).
That is, we prove Theorem~\ref{thm:closest-hardness-max-norm},
which follows from the theorem below.


\begin{theorem}\label{thm:Hardness-ClosestPairinf}
Assuming OVH, for any $\varepsilon>0$ and $d=\omega(\log n)$,
there is no $O\left(n^{2-\varepsilon}\right)$-time algorithm
that, given a point-set $P\subseteq\reals^d$, 
distinguishes between the following two cases:
\begin{itemize}
\item There exists a pair of vectors in $P$ with $L^{\infty}$-distance one.
\item Every pair of vectors in $P$ has $L^{\infty}$-distance two.
\end{itemize}
In particular, approximating \closest in 
the $L^{\infty}$-metric to within a factor of two
is at least as hard as solving the Orthogonal Vectors problem.
\end{theorem}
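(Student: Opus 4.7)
The plan is to invoke the reduction from \ovec already described in Section~\ref{sec:base-reduction}. Given an \ovec instance $(U,W)\subseteq\{0,1\}^d\times\{0,1\}^d$ with $|U|+|W|=n$ and $d=\omega(\log n)$, I would first remove duplicates in time $O(dn\log n)$ and then produce $P=A\cup B\subseteq\{-1,0,1,2\}^d$ as in the construction: each $u\in U$ maps to $a\in A$ with coordinates in $\{0,2\}$ and each $w\in W$ maps to $b\in B$ with coordinates in $\{-1,1\}$. The goal is to show that the minimum $L^\infty$-distance in $P$ is $1$ exactly when $(U,W)$ is a YES instance of \ovec, and equals $2$ otherwise, so that \closest in $L^\infty$ inherits the OV lower bound with a clean factor-of-two gap.

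The core of the argument is a case analysis by pair type. For distinct $a,a'\in A$, each coordinate of $a-a'$ lies in $\{-2,0,2\}$, and deduplication ensures at least one coordinate is nonzero, so $\|a-a'\|_\infty=2$; the same reasoning applied to the $\{-1,1\}$ encoding of $B$ gives $\|b-b'\|_\infty=2$ for distinct $b,b'\in B$. For a bichromatic pair $(a,b)$ corresponding to $(u,w)\in U\times W$, the construction already forces $|a_j-b_j|=3$ precisely when $u_j=w_j=1$ and $|a_j-b_j|=1$ otherwise, so $\|a-b\|_\infty=1$ iff $u$ and $w$ are orthogonal, and $\|a-b\|_\infty=3$ otherwise.

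Assembling these cases, a YES instance of \ovec yields a closest pair of distance $1$ (witnessed by an orthogonal bichromatic pair), while in a NO instance every bichromatic pair has distance $3$ and every monochromatic pair has distance $2$, so the minimum pairwise $L^\infty$-distance in $P$ equals $2$. Any $O(n^{2-\varepsilon})$-time algorithm distinguishing these two cases would then solve \ovec in subquadratic time, contradicting OVH and establishing Theorem~\ref{thm:Hardness-ClosestPairinf}; because the closest-pair distance jumps from $1$ to $2$, the same reduction simultaneously rules out any $(2-o(1))$-approximation algorithm, which is Theorem~\ref{thm:closest-hardness-max-norm}. The construction is essentially mechanical, so the one place that needs care is the deduplication step: without it, duplicate vectors in $U$ or $W$ could contribute a zero intra-set distance and collapse the NO-case minimum below $2$. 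Since deduplication costs only $\widetilde{O}(n)$ time, it does not affect the asymptotic lower bound.
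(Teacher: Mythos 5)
Your proposal is correct and follows essentially the same route as the paper: the same $\{0,2\}$/$\{1,-1\}$ encoding from Section~\ref{sec:base-reduction}, the same deduplication preprocessing, and the same case analysis showing monochromatic pairs sit at distance exactly $2$ while bichromatic pairs sit at distance $1$ or $3$ according to orthogonality. No gaps; the emphasis on deduplication as the one delicate step matches the paper's treatment.
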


\begin{proof}
Consider the sets $A$ and $B$ constructed 
from an instance of \ovec in Section~\ref{sec:base-reduction}.

First, observe that every inner pair has 
$L^{\infty}$-distance at least $2$.
To see this, consider an inner pair $a,a'\in A$.
Since all inner pairs are distinct, 
they must have at least one different coordinate, say
$a_j\neq a'_j$ for some $j\in\{1,\ldots,n/2\}$.
Consequently, $(a_j,a'_j)\in\{(0,2),(2,0)\}$,
implying that the $L^{\infty}$-distance of $a$ and $a'$
is at least $2$.
The case of an inner pair $b,b'\in B$ is similar.
Thus, any pair of vectors with $L^{\infty}$-distance less than two
must be a crossing pair $a\in A, b\in B$.

Now suppose there is a pair of orthogonal vectors $u^*\in U, w^*\in W$,
and let $a^*\in A$ and $b^*\in B$ be the corresponding vectors 
of $u^*$ and $w^*$ in the \closest instance, respectively.
Then we know from the construction that 
$(a^*_j,b^*_j)\in\{(0,1),(0,-1),(2,1)\}$
for all coordinates $j\in[n]$.
Thus, the $L^{\infty}$-distance of $a^*$ and $b^*$ must be one.

Next suppose that there is no orthogonal pair of vectors in $U \times W$.
Then every pair of vectors $(u,w)\in U\times W$  must have 
one coordinate, say $j$, such that $u_j=w_j=1$.
So, the corresponding vectors $a$ and $b$ (of $u$ and $w$,
respectively) must have $a_i=2, b_j=-1$.
This means that $a$ and $b$ have $L^{\infty}$-distance at least three.
(Note that there might be an inner pair with $L^{\infty}$-distance two.)
Therefore, we conclude that every pair of points in $A\cup B$
has $L^{\infty}$-distance at least two.
\end{proof}

\section{Conclusion and Discussion}
\label{sec:Conclusion}

We have studied the sphericity and contact dimension of 
the complete bipartite graph in various metrics. 
We have proved lower and upper bounds on these measures for some metrics.
However, biclique sphericity and biclique contact dimension in 
the $L^1$-metric remains poorly understood as we 
are unable to show any strong upper or lower bounds. 
However, we believe that both $L^1$ and $L^2$ metrics 
have linear upper and lower bounds. 
To be precise, we raise the following conjecture:

\begin{conjecture}[$L^1$-Biclique Sphericity Conjecture]
$$\bisph(L^1)=\Omega(n).$$
\end{conjecture}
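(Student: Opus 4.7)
The plan is to attack the conjecture through the cut-metric decomposition of $L^1$. Suppose $(A,B)\subseteq\reals^d$ is a polar pair with $|A|=|B|=n$, normalized so that every crossing distance is at most $\alpha=1$ and every inner distance exceeds $1+\epsilon$ for some $\epsilon>0$. Writing the induced distance matrix as $D=\sum_i w_i\, d_{C_i}$ over cuts $C_i=(S_i,\bar S_i)$, and setting $a_i:=|A\cap S_i|$ and $b_i:=|B\cap S_i|$, a direct calculation gives
\[
\Phi(A,B)\;:=\;\sum_{u,u'\in A} D(u,u')+\sum_{v,v'\in B} D(v,v')-2\sum_{u\in A,v\in B} D(u,v)\;=\;-2\sum_{i} w_i (a_i-b_i)^2\;\le\;0,
\]
while the polar conditions force $\Phi \ge 2n((n-1)\epsilon-\alpha)$; this already yields $\epsilon\le \alpha/(n-1)$, precisely the quantitative content of Corollary~\ref{lem:LB-point-set-l1-large-gap}. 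So any additional progress must look past this first-moment identity.

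To obtain a linear lower bound on $d$, I would replace the above by a second-moment / rank inequality. Let $S$ be the $2n\times2n$ sign matrix that is $+1$ on same-set pairs and $-1$ on crossing pairs, so that $S$ has rank at most three. The inequality above amounts to $\mathrm{tr}(SD) \ge \Omega(n^2\epsilon) - O(n\alpha)$, which is what limits $\epsilon$ to $O(1/n)$. I would instead study higher-order quantities such as $\mathrm{tr}((SD)^2)$ or analogous polynomial invariants, which sum over quadruples of points and should pick up a factor $\Omega(n^4\epsilon^2)$ from the polar gap, while each cut $d_{C_i}$ contributes only $O(w_i^2)$ to such traces. A Cauchy--Schwarz argument would then force the number of cuts, and hence $d$, to be $\Omega(n)$. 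This is the $L^1$ analogue of the spectral proof given in Appendix~\ref{sec:L2-dim} for the $L^2$ case, where positive semidefiniteness of the squared-distance matrix yields quadratic bounds automatically.

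The main obstacle, and the reason the conjecture has resisted progress, is the failure of the natural distinct-pair strengthening of Theorem~\ref{lem:NoDistL1}: taking $X,Y$ both uniform on $\{0,2\}\subset\reals$ yields distinct-pair expected inner distances of $2$ and expected crossing distance of $1$ on each side, so no linear averaging inequality over $L^1$ can rule out polar pairs of vanishing gap. Thus any successful proof must either (a) amplify a polar pair with gap $\epsilon=\Theta(1/n)$ into one with constant gap at only a constant-factor dimension blow-up---a form of $L^1$-rigidity that is not currently known---or (b) replace the cut-level linear identity above with a genuinely quadratic inequality, most plausibly by leveraging the polytope geometry of $L^1$-balls in the spirit of the techniques used toward Kusner's conjecture on the $L^1$ equilateral dimension. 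I view (b) as the more promising route, since the tightness of the cut identity shows that no purely combinatorial refinement of the corollary can close the $\Theta(1/n)$-gap barrier.
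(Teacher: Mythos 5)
This statement is the $L^1$-Biclique Sphericity Conjecture, which the paper explicitly leaves \emph{open}: the authors state that they ``are unable to show any strong upper or lower bounds'' for the $L^1$-metric, and the only results they prove in this direction are the weak bound $\bisph(L^1)=\Omega(\log n)$ (inherited from Theorem~\ref{thm:lower-bound-Lp}) and the average-case obstruction of Theorem~\ref{lem:NoDistL1} with its Corollary~\ref{lem:LB-point-set-l1-large-gap}. Your proposal does not prove the conjecture either, and to your credit it says so: the only complete argument it contains is the cut-metric identity $\Phi(A,B)=-2\sum_i w_i(a_i-b_i)^2\le 0$, which is a clean (and correct) rederivation of exactly the first-moment content of Theorem~\ref{lem:NoDistL1}/Corollary~\ref{lem:LB-point-set-l1-large-gap}, yielding only $\epsilon\le\alpha/(n-1)$ and saying nothing about the dimension $d$. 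Your counterexample ($X,Y$ both uniform on $\{0,2\}$) correctly explains why the distinct-pair strengthening fails, which is precisely why the paper's Theorem~\ref{lem:NoDistL1} must allow sampling the same point twice.

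The genuine gap is the entire second half: the proposed second-moment route via $\mathrm{tr}((SD)^2)$ is not carried out, and its key quantitative claim is unjustified. The matrix $S$ you describe is $ss^{T}$ for the $\pm1$ set-indicator vector $s$ (rank one, not ``at most three''), and for a single cut metric $d_{C}$ the entries of $Sd_{C}$ can be of order $n$, so $\mathrm{tr}((S\,w\,d_{C})^2)$ scales like $w^2 n^{3}$ rather than the $O(w^2)$ you assert; without that separation the Cauchy--Schwarz step produces no lower bound on the number of cuts, let alone on $d$ (and in any case the number of cuts in an $L^1$ decomposition can exceed $d$, so one would still need to pass from cuts back to coordinates). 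Since neither route (a) nor route (b) is executed, the proposal should be read as a discussion of obstacles consistent with the paper's own assessment, not as a proof of the conjecture, which remains open both in the paper and after your attempt.
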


We have also shown conditional lower bounds for the Closest Pair problem in the $L^p$-metric, for all $p\in\reals_{> 2}\cup\{\infty\}$, by using polar-pair of point-sets. 
However, it is unlikely that our techniques could get to the regime of
$L^2$, $L^1$, and $L^0$, which are popular metrics.
An open question is thus whether there exists an alternative technique
to derive a lower bound from OVH to the Closest Pair problem
for these metrics.
The answer might be on the positive side, i.e., 
there might exist an algorithm that performs well in the $L^2$-metric 
because there are more tools available, e.g.,
Johnson-Lindenstrauss' dimension reduction. 
Thus, it is possible that there exists a strongly subquadratic-time
algorithm in the $L^2$-metric.
This question remains 
an outstanding open problem. 

\paragraph*{Acknowledgements}
We are grateful to the anonymous reviewers for their detailed comments and for identifying a gap in the proof of Theorem~\ref{lem:NoDistL1} and helping us fix it (and even strengthen it).
We would like to thank Aviad Rubinstein for sharing with us \cite{R17} and also for pointing out a mistake in an earlier version of the paper.
We would like to thank Petteri Kaski and Rasmus Pagh for useful discussions 
and also for pointing out the reference \cite{AW15}. 
We would like to thank Eylon Yogev and Amey Bhangale for some preliminary discussions.
We would like to thank Uriel Feige for a lot of useful comments and discussions.
Finally, we would like to thank  Inbal Livni Navon, Orr Paradise, and Roei Tell for 
helping us improve the presentation of the paper.

Roee David is supported by the ISF grant no. 621/12 and by the I-CORE Program grant no. 4/11.
Karthik C.~S. is supported by the ERC-StG grant no. 239985 and ERC-CoG grant no. 772839.
Bundit Laekhanukit is partially supported by ISF grant no. 621/12, I-CORE grant no. 4/11
and by the DIMACS/Simons Collaboration on Bridging Continuous and Discrete Optimization through NSF grant no. CCF-1740425.
Parts of the work were done while all the authors were at the Weizmann Institute of Science,
and some parts were done while the third author was visiting the Simons Institute for the Theory of Computing.

\bibliographystyle{alpha}
\bibliography{geom}

\begin{thebibliography}{ARW17b}

\bibitem[AP03]{AP03-eqdim}
Noga Alon and Pavel Pudl{\'a}k.
\newblock Equilateral sets in $\ell_p^n$.
\newblock {\em Geometric {\&} Functional Analysis GAFA}, 13(3):467--482, 2003.

\bibitem[ARW17a]{ARW17-FOCS17}
Amir Abboud, Aviad Rubinstein, and R.~Ryan Williams.
\newblock Distributed {PCP} theorems for hardness of approximation in {P}.
\newblock In {\em 58th {IEEE} Annual Symposium on Foundations of Computer
  Science, {FOCS} 2017, Berkeley, CA, USA, October 15-17, 2017}, pages 25--36,
  2017.

\bibitem[ARW17b]{AR17}
Amir Abboud, Aviad Rubinstein, and Ryan Williams.
\newblock Distributed {PCP} theorems for hardness of approximation in {P}.
\newblock {\em CoRR}, abs/1706.06407, 2017.
\newblock Preliminary version in FOCS'17.

\bibitem[AW15]{AW15}
Josh Alman and Ryan Williams.
\newblock Probabilistic polynomials and hamming nearest neighbors.
\newblock In {\em {IEEE} 56th Annual Symposium on Foundations of Computer
  Science, {FOCS} 2015, Berkeley, CA, USA, 17-20 October, 2015}, pages
  136--150, 2015.

\bibitem[BCL98]{BCL98-eqdim}
Hans{-}J{\"{u}}rgen Bandelt, Victor Chepoi, and Monique Laurent.
\newblock Embedding into rectilinear spaces.
\newblock {\em Discrete {\&} Computational Geometry}, 19(4):595--604, 1998.

\bibitem[Ben80]{Ben80}
Jon~Louis Bentley.
\newblock Multidimensional divide-and-conquer.
\newblock {\em Commun. {ACM}}, 23(4):214--229, 1980.

\bibitem[Ben83]{Ben83}
Michael Ben{-}Or.
\newblock Lower bounds for algebraic computation trees (preliminary report).
\newblock In {\em Proceedings of the 15th Annual {ACM} Symposium on Theory of
  Computing, 25-27 April, 1983, Boston, Massachusetts, {USA}}, pages 80--86,
  1983.

\bibitem[BL05]{BL05}
Yonatan Bilu and Nathan Linial.
\newblock Monotone maps, sphericity and bounded second eigenvalue.
\newblock {\em J. Comb. Theory, Ser. {B}}, 95(2):283--299, 2005.

\bibitem[BS76]{BS76}
Jon~Louis Bentley and Michael~Ian Shamos.
\newblock Divide-and-conquer in multidimensional space.
\newblock In {\em Proceedings of the 8th Annual {ACM} Symposium on Theory of
  Computing, May 3-5, 1976, Hershey, Pennsylvania, {USA}}, pages 220--230,
  1976.

\bibitem[DM94]{DM94}
Michel Deza and Hiroshi Maehara.
\newblock A few applications of negative- type inequalities.
\newblock {\em Graphs and Combinatorics}, 10(2-4):255--262, 1994.

\bibitem[FM88]{FM88}
Peter Frankl and Hiroshi Maehara.
\newblock On the contact dimensions of graphs.
\newblock {\em Discrete {\&} Computational Geometry}, 3:89--96, 1988.

\bibitem[Gol08]{G09}
Oded Goldreich.
\newblock {\em Computational Complexity: A Conceptual Perspective}.
\newblock Cambridge University Press, New York, NY, USA, 1 edition, 2008.

\bibitem[GS17]{GS16}
Omer Gold and Micha Sharir.
\newblock Dominance product and high-dimensional closest pair under
  ${L}_\infty$.
\newblock In {\em 28th International Symposium on Algorithms and Computation,
  {ISAAC} 2017, December 9-12, 2017, Phuket, Thailand}, pages 39:1--39:12,
  2017.

\bibitem[Guy83]{GUY83-KusnerConj}
Richard~K Guy.
\newblock An olla-podrida of open problems, often oddly posed.
\newblock {\em The American Mathematical Monthly}, 90(3):196--200, 1983.

\bibitem[HNS88]{HNS88}
Klaus~H. Hinrichs, J{\"{u}}rg Nievergelt, and Peter Schorn.
\newblock Plane-sweep solves the closest pair problem elegantly.
\newblock {\em Inf. Process. Lett.}, 26(5):255--261, 1988.

\bibitem[ILLP04]{ILLP04}
Piotr Indyk, Moshe Lewenstein, Ohad Lipsky, and Ely Porat.
\newblock Closest pair problems in very high dimensions.
\newblock In {\em Automata, Languages and Programming: 31st International
  Colloquium, {ICALP} 2004, Turku, Finland, July 12-16, 2004. Proceedings},
  pages 782--792, 2004.

\bibitem[JL84]{JL84}
William~B Johnson and Joram Lindenstrauss.
\newblock Extensions of {Lipschitz} mappings into a {Hilbert} space.
\newblock {\em Contemporary mathematics}, 26(189-206):1, 1984.

\bibitem[Jus72]{J72}
J{\o}rn Justesen.
\newblock Class of constructive asymptotically good algebraic codes.
\newblock {\em {IEEE} Trans. Information Theory}, 18(5):652--656, 1972.

\bibitem[KLS00]{KLS00-eqdim-d4}
Jack~H. Koolen, Monique Laurent, and Alexander Schrijver.
\newblock Equilateral dimension of the rectilinear space.
\newblock {\em Des. Codes Cryptography}, 21(1/3):149--164, 2000.

\bibitem[KM95]{KM95}
Samir Khuller and Yossi Matias.
\newblock A simple randomized sieve algorithm for the closest-pair problem.
\newblock {\em Inf. Comput.}, 118(1):34--37, 1995.

\bibitem[Mae84]{M84}
Hiroshi Maehara.
\newblock Space graphs and sphericity.
\newblock {\em Discrete Applied Mathematics}, 7(1):55--64, 1984.

\bibitem[Mae85]{M85-contact-pattern}
Hiroshi Maehara.
\newblock Contact patterns of equal nonoverlapping spheres.
\newblock {\em Graphs and Combinatorics}, 1(1):271--282, 1985.

\bibitem[Mae91]{M91}
Hiroshi Maehara.
\newblock Dispersed points and geometric embedding of complete bipartite
  graphs.
\newblock {\em Discrete {\&} Computational Geometry}, 6:57--67, 1991.

\bibitem[Rab76]{Rabin76}
Michael~O. Rabin.
\newblock Probabilistic algorithms.
\newblock In {\em Proceedings of a Symposium on New Directions and Recent
  Results in Algorithms and Complexity, Computer Science Department,
  Carnegie-Mellon University, April 7-9, 1976}, pages 21--39, 1976.

\bibitem[Rob69]{R69cubicity}
Fred~S Roberts.
\newblock On the boxicity and cubicity of a graph.
\newblock {\em Recent Progresses in Combinatorics}, pages 301--310, 1969.

\bibitem[Rub18]{R17}
Aviad Rubinstein.
\newblock Hardness of approximate nearest neighbor search.
\newblock In {\em STOC}, 2018.
\newblock To appear.

\bibitem[SH75]{SH75}
Michael~Ian Shamos and Dan Hoey.
\newblock Closest-point problems.
\newblock In {\em 16th Annual Symposium on Foundations of Computer Science,
  Berkeley, California, USA, October 13-15, 1975}, pages 151--162, 1975.

\bibitem[Ver10]{vershynin2010}
Roman Vershynin.
\newblock Introduction to the non-asymptotic analysis of random matrices.
\newblock {\em CoRR}, abs/1011.3027, 2010.

\bibitem[Wil05]{W05}
Ryan Williams.
\newblock A new algorithm for optimal 2-constraint satisfaction and its
  implications.
\newblock {\em Theor. Comput. Sci.}, 348(2-3):357--365, 2005.
\newblock Preliminary version in ICALP'04.

\bibitem[Wil15]{Williams15}
Virginia~Vassilevska Williams.
\newblock Hardness of easy problems: Basing hardness on popular conjectures
  such as the strong exponential time hypothesis (invited talk).
\newblock In {\em 10th International Symposium on Parameterized and Exact
  Computation, {IPEC} 2015, September 16-18, 2015, Patras, Greece}, pages
  17--29, 2015.

\bibitem[Wil16]{Williams16}
Virginia~Vassilevska Williams.
\newblock Fine-grained algorithms and complexity (invited talk).
\newblock In {\em 33rd Symposium on Theoretical Aspects of Computer Science,
  {STACS} 2016, February 17-20, 2016, Orl{\'{e}}ans, France}, pages 3:1--3:1,
  2016.

\bibitem[Wil18a]{W17diff}
Ryan Williams.
\newblock On the difference between closest, furthest, and orthogonal pairs:
  Nearly-linear vs barely-subquadratic complexity.
\newblock In {\em Proceedings of the Twenty-Ninth Annual {ACM-SIAM} Symposium
  on Discrete Algorithms, {SODA} 2018, New Orleans, LA, USA, January 7-10,
  2018}, pages 1207--1215, 2018.

\bibitem[Wil18b]{Vir18}
Virginia~Vassilevska Williams.
\newblock On some fine-grained questions in algorithms and complexity.
\newblock In {\em Proc. Int. Cong. of Math.}, volume~3, pages 3431--3472, 2018.

\bibitem[Yao91]{Yao91}
Andrew~Chi{-}Chih Yao.
\newblock Lower bounds for algebraic computation trees with integer inputs.
\newblock {\em {SIAM} J. Comput.}, 20(4):655--668, 1991.
\newblock Preliminary version in FOCS'89.

\end{thebibliography}
\appendix
\clearpage
\section{Geometric Representation of Biclique in $L^2$}
\label{sec:L2-dim}
\label{sec:L2-dim:lower-bound}

In this section we prove a lower bound on $\bisph(L^2)$ of $(n-3)/2$
 using spectral analysis.

%

\begin{theorem}
\label{thm:Lower bound on the embedding dimension}
For every $n,d\in \integers$, and any two sets $A,B\subseteq \reals^{d}$,
each of cardinality $n$, suppose the following holds
for some non-negative real numbers $\alpha$ and $\beta$ with $\alpha>\beta$.
\begin{enumerate}
\item For every distinct $u$ and $v$ both in $A$, $\|u-v\|_2>\alpha$.
\item For every distinct $u$ and $v$ both in $B$, $\|u-v\|_2>\alpha$.
\item For  every $u$ in $A$ and $v$ in $B$, $\|u-v\|_2\le \beta$.
\end{enumerate}
Then the dimension $d$ must be at least $\frac{n-3}{2}$.
\end{theorem}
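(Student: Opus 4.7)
\emph{Proof plan.} I would prove this with a Gram-matrix / spectral argument. Arbitrarily index $A = \{a_1,\dots,a_n\}$ and $B = \{b_1,\dots,b_n\}$, form the difference vectors $u_i := a_i - b_i \in \reals^d$, and consider the $n \times n$ Gram matrix $K$ defined by $K_{ij} := \langle u_i, u_j\rangle$. Since each $u_i$ lives in $\reals^d$, the matrix $K$ is positive semidefinite of rank at most $d$. The hypothesis on the three types of squared distances will force a very restrictive sign pattern on $K$, which in turn will force $d$ to be large.

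Concretely, I would apply the polarization identity $\langle x,y\rangle = \tfrac{1}{2}(\|x\|_{2}^2 + \|y\|_{2}^2 - \|x-y\|_{2}^2)$ to each of the four inner products in the expansion of $\langle a_i - b_i,\, a_j - b_j\rangle$. The norms of the individual points all cancel, and one is left with
\[
K_{ij} \;=\; \tfrac{1}{2}\Bigl[\|a_i - b_j\|_{2}^2 + \|b_i - a_j\|_{2}^2 - \|a_i - a_j\|_{2}^2 - \|b_i - b_j\|_{2}^2\Bigr].
\]
For $i \neq j$, the first two squared distances are crossing-pair distances (at most $\beta^2$ each) while the last two are inner-pair distances (strictly greater than $\alpha^2$ each), so $K_{ij} < \beta^2 - \alpha^2 < 0$. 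This strict negativity also rules out $u_i = 0$ whenever $n \geq 2$, since $u_i=0$ would make the $i$-th off-diagonal row of $K$ identically zero.

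The finishing step is the classical linear-algebraic fact that any collection of nonzero vectors in $\reals^d$ with pairwise strictly negative inner products has cardinality at most $d+1$. Applied to $u_1,\dots,u_n$, this gives $n \leq d+1$, i.e.\ $d \geq n-1$, which is in fact strictly stronger than the stated bound $(n-3)/2$. The main obstacle is just justifying this classical fact; I would prove it by induction on $d$. Pick any $u_n$, split each remaining $u_i$ into its component along $u_n$ and its projection $u_i'$ onto $u_n^\perp \cong \reals^{d-1}$. The constraints $\langle u_i, u_n\rangle<0$ force every $u_i$ to have a strictly negative component along $u_n$, and if any $u_i$ were parallel to $u_n$ then the product $\langle u_i, u_j\rangle = \lambda \langle u_n, u_j\rangle$ for a third index $j$ would come out positive, a contradiction; so each $u_i'$ is nonzero. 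Moreover, the $u_n$-components contribute a positive term to $\langle u_i, u_j\rangle$, forcing $\langle u_i', u_j'\rangle$ to remain strictly negative. The inductive hypothesis in $\reals^{d-1}$ then yields $n-1 \leq d$, closing the argument.
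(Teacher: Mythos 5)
Your proof is correct, and it takes a genuinely different --- and in fact more efficient --- route than the paper's. The paper's proof first rescales all $2n$ points to a common $L^2$-norm (by appending an extra coordinate), assembles the $2(d+1)\times 2n$ block matrix whose columns are $(a_i,b_i)$ and $(b_i,a_i)$, and applies the Perron--Frobenius theorem to the entrywise-negative symmetric block $M_{1,1}-M_{1,2}$ to conclude that $M^TM$ has at least $n-1$ positive eigenvalues; comparing $\mathrm{rank}(M)\le 2(d+1)$ with $\mathrm{rank}(M^TM)\ge n-1$ gives $d\ge (n-3)/2$. Your argument replaces all of this with the difference vectors $u_i=a_i-b_i$: the polarization identity makes the individual point norms cancel (so no normalization or lifting step is needed), the gap $\alpha>\beta$ forces every off-diagonal inner product $\langle u_i,u_j\rangle$ to be strictly negative, and the classical fact that $\reals^d$ contains at most $d+1$ nonzero vectors with pairwise strictly negative inner products finishes the job. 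The payoff is a more elementary proof and a strictly stronger conclusion, $d\ge n-1$ rather than $d\ge (n-3)/2$, which essentially matches the Maehara-type linear lower bound $\bisph(L^2)>n$ quoted in the introduction (and is consistent with the Frankl--Maehara upper bound of $1.5n$). The only points to nail down in a full write-up are exactly the ones you flag: each $u_i$ must be nonzero (your sign condition gives this once $n\ge 2$), the degenerate cases $n\le 2$ are handled separately, and the induction step of the classical fact needs a third index to rule out some $u_i$ being parallel to $u_n$ --- all routine.
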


\begin{proof}
Let $\left|A\right|=\left|B\right|=n$ be \emph{arbitrary} two sets
of vectors in $\reals^{d}$ that satisfy the above conditions.
We will show that $d\ge \frac{n-3}{2}$.
First, we scale all the vectors in $A\cup B$ so that the vector with the largest $L^2$-norm  in $A\cup B$ has
 $L^{2}$-norm equal to $1$ (by this scaling, the parameters $\alpha,\beta$ are scaled as well by, say $s$. For brevity, we will write $\alpha$ for $\alpha/s$ and similarly for $\beta$.).
We modify $A$ and $B$ in two steps as follows.
First, we add one new coordinate to all of the vectors with value $K\gg1$
(to be determined exactly later)
and obtain $A_{1},B_{1}\subseteq\reals^{d+1}$.
Note that each element in the new set of vectors $A_{1}$ and $B_{1}$
has $L^{2}$-norm  roughly equal to $K$.
More specifically, the square of the $L^2$-norm is bounded between $K^2$ and $K^2+1$ and the vector with the largest $L^2$-norm in $A_1\cup B_1$ has
 $L^{2}$-norm equal to $\sqrt{K^2+1}$.

By adding to
the last coordinate of each vector $u$ in $A_{1}\cup B_{1}$
a positive  value $c_u$ smaller than $1/K$,
we can impose that all the vectors are
 with $L^{2}$-norm equal to $\sqrt{K^2+1}$.
To see this, note that if we have a vector $u_1$ in  $A_1\cup B_1$ that has $L^2$-norm equal to $K$ (namely, as small as possible), then by setting $c_{u_1}$ to satisfy 
\begin{equation}
(K+c_{u_1})^2 = K^2 +1, \label{eq:K>>1}
\end{equation}
we have that the $L^2$-norm of $u_1$ is $\sqrt{K^2+1}$.
So, any $c_{u_1}$ that solves Eq.~\ref{eq:K>>1} is smaller than $1/K$.
By assuming that $u_1$ has a larger $L^2$-norm, we would have a better bound on $c_{u_1}$.

Let $A'_{1} \cup B'_{1}$  be the set of vectors that was obtained by adding $c_u$'s as described above. Let $u,v$ be vectors in $A_{1}\cup B_{1}$ and let $u',v'$ be the corresponding  vectors in $A'_{1}\cup B'_{1}$. By definition, the following holds:
\begin{align*}
\|u-v\|_2^2\le\|u'-v'\|_2^2
=\|u-v\|_2^2+(c_u-c_v)^2
\le \|u-v\|_2^2 + 1/K^2.
\end{align*}

Hence, by choosing $K$ to satisfy $1/K^2\le \frac{\alpha^2-\beta^2}{2}$, it follows that $A'_{1} \cup B'_{1}$  satisfies the conditions of the theorem with $\alpha'=\alpha$ and $\beta'=\sqrt{\beta^2+\frac{\alpha^2-\beta^2}{2}}< \alpha'$. Again, for brevity, we refer to  $\alpha'$ as $\alpha$ and $\beta'$ as $\beta$.

Given $A'_{1},B'_{1}\subseteq\reals^{d+1}$, let $a_{1},a_{2},\ldots,a_{n}$
be the vectors from $A'_{1}$, and $b_{1},b_{2},\ldots,b_{n}$ be the vectors
from $B'_{1}$.
Consider the following matrix in $\reals^{2\left(d+1\right)\times 2n}$:

\begin{equation}
M=\left(\begin{array}{cc}
a_{1},a_{2},\ldots,a_{n} & b_{1},b_{2},\ldots,b_{n}\\
b_{1},b_{2},\ldots,b_{n} & a_{1},a_{2},\ldots,a_{n}
\end{array}\right)\,\label{eq:Symmetric swap}
\end{equation}

Define the set $A_{2}$ to be the first $n$ column vectors of $M$
and  $B_{2}$ to be the last $n$ column vectors of $M$.
Note that $A_{2}\cup B_{2}\subseteq\reals^{2\left(d+1\right)}$
and that the pair of point-sets satisfy the conditions of the theorem with $\alpha''=\sqrt{2}\alpha'>\sqrt{2}\beta'=\beta''$.
Consider the inner product matrix $M^{T}M\in\reals^{2n\times 2n}$
written in a block matrix form as follows:
$$M^{T}M=cI_{2n\times 2n}+\left(\begin{array}{cc}
M_{1,1} & M_{1,2}\\
M_{2,1} & M_{2,2}
\end{array}\right)
,$$
where $M_{1,1},M_{1,2},M_{2,1},M_{2,2}\in\reals^{n\times n}$
and $c$ is such that the matrix $\left(\begin{array}{cc}
M_{1,1} & M_{1,2}\\
M_{2,1} & M_{2,2}
\end{array}\right)$ has the value $0$ on the diagonal elements
(recall that all the vectors have the same $L^{2}$-norm).
By the definition of $M$ (see Eq.~\ref{eq:Symmetric swap}),
one can check that the following hold.
\begin{enumerate}
\item The matrices $M_{1,1},M_{1,2},M_{2,1},M_{2,2}$ are all symmetric:
for $M_{1,1},M_{2,2}$ it follows since $M^{T}M$ is a symmetric matrix, and
for $M_{1,2},M_{2,1}$ it follows by the way $M$ was defined;
see Eq.~\ref{eq:Symmetric swap}.
\item $M_{1,1}=M_{2,2}$. This follows by Eq.~\ref{eq:Symmetric swap}.
\item $M_{1,2}=M_{2,1}$. This follows since $M_{1,2}=M_{2,1}^{T}=M_{2,1}$.
Here the first equality follows since $M^{T}M$ is a symmetric matrix,
and the last equality follows by item~1.
\end{enumerate}
Hence, we can write
$
M^{T}M=cI_{2n\times 2n}+\left(\begin{array}{cc}
M_{1,1} & M_{1,2}\\
M_{1,2} & M_{1,1}
\end{array}\right)\,.
$
In the rest of the proof, we analyze some of the eigenvectors of $M^{T}M$.
To this end, we consider the matrix $M_{1,1}-M_{1,2}$.
Since both $M_{1,1}$ and $M_{1,2}$ are symmetric,
we have that $M_{1,1}-M_{1,2}$ is symmetric and has real eigenvalues.
Moreover, by the conditions of the theorem, it holds that $M_{1,1}-M_{1,2}$
is strictly negative (i.e., all the entries of the matrix are negative).
This follows because  all the vectors have the same $L^{2}$-norm.
Let $x_{1},x_{2},\ldots,x_{n}$ be the eigenvectors
of $M_{1,1}-M_{1,2}$ with eigenvalues $\lambda_{1},\lambda_{2},\ldots,\lambda_{n}$.
By the Perron\textendash Frobenius Theorem it follows that $\lambda_{1}$
is strictly smaller than $\lambda_{2},\lambda_{3},\ldots,\lambda_{n}$.

Let $x_{i}\in\reals^{n}$ be an eigenvector of $M_{1,1}-M_{1,2}$
with eigenvalue $\lambda_{i}$.
Then the following holds.
\begin{align*}
\left(\begin{array}{cc}
M_{1,1} & M_{1,2}\\
M_{1,2} & M_{1,1}
\end{array}\right)\left(\begin{array}{c}
x_{i}\\
-x_{i}
\end{array}\right)  &=\left(\begin{array}{c}
\left(M_{1,1}-M_{1,2}\right)x_{i}\\
-\left(M_{1,1}-M_{1,2}\right)x_{i}
\end{array}\right)\\
 &=\left(\begin{array}{c}
\lambda_{i}x_{i}\\
-\lambda_{i}x_{i}
\end{array}\right)\\
&=\lambda_{i}\left(\begin{array}{c}
x_{i}\\
-x_{i}
\end{array}\right).
\end{align*}

Hence, the vectors $\left(\begin{array}{c}
x_{1}\\
-x_{1}
\end{array}\right),\left(\begin{array}{c}
x_{2}\\
-x_{2}
\end{array}\right),\ldots,\left(\begin{array}{c}
x_{n}\\
-x_{n}
\end{array}\right)$ are eigenvectors of $\left(\begin{array}{cc}
M_{1,1} & M_{1,2}\\
M_{1,2} & M_{1,1}
\end{array}\right)$ with eigenvalues
$\lambda_{1},\lambda_{2},\ldots,\lambda_{n}$.
The operation of adding $cI_{2n\times 2n}$ to $\left(\begin{array}{cc}
M_{1,1} & M_{1,2}\\
M_{1,2} & M_{1,1}
\end{array}\right)$ shifts the eigenvalues of $M^{T}M$ to
$\lambda_{1}+c,\lambda_{2}+c,\ldots,\lambda_{n}+c$.

Since $M^{T}M$ is a positive semidefinite matrix, 
$\lambda_{1}+c,\lambda_{2}+c,\ldots,\lambda_{n}+c\ge0$.
More specifically, $\lambda_{1}+c\ge0$ and
$\lambda_{2}+c,\ldots,\lambda_{n}+c > 0$
(since $\lambda_{1}<\lambda_{2},\lambda_{3}\ldots,\lambda_{n}$).
It follows that $M^{T}M$ has at least $n-1$ positive eigenvalues.
Hence, the rank of $M^{T}M$ is at least $n-1$.
By standard linear algebra arguments, it holds that the rank of $M$
is at least the rank of $M^{T}M$,
and the rank of $M$ is at most $2\left(d+1\right)$.
That is,
\[
2(d+1) \geq \mathrm{rank}(M) \geq \mathrm{rank}(M^TM) \geq n - 1.\qedhere
\]
\end{proof}

\end{document}